\newtheorem{theorem}{Theorem}
\newtheorem{defn}{Definition}
\newcommand*{\dt}[1]{%
  \accentset{\mbox{\large\bfseries .}}{#1}}
\newcommand{\mN}{\mathcal{N}}
\newcommand{\mS}{\mathcal{S}}
\newcommand{\mC}{\mathcal{C}}
\newcommand{\mT}{\mathcal{T}}
\newcommand{\Esym}{\mathrm{E}}
\newcommand{\E}[1]{\Esym\left[#1\right]}
\newcommand{\bX}{\mathbf{X}}
\newcommand{\bx}{\mathbf{x}}
\newcommand{\bD}{\mathbf{D}}
\newcommand{\bp}{\mathbf{p}}
\newcommand{\be}{\mathbf{e}}
\newcommand{\bJ}{\mathbf{J}}
\begin{document}

%

\title{TARDIS: Stably shifting traffic in space and time}

\numberofauthors{6}
\author {
\alignauthor Richard G. Clegg,\\
    \affaddr Dept of Elec. Eng. \\
    \affaddr University College London \\
    \email{richard@richardclegg.org}\\
\alignauthor Raul Landa\\
    \affaddr Dept of Elec. Eng. \\
    \affaddr University College London \\
    \email{raul.landa@ucl.ac.uk}\\
\alignauthor Jo\~{a}o Taveira Ara\'{u}jo\\
    \affaddr Dept of Elec. Eng. \\
    \affaddr University College London \\
    \email{j.araujo@ucl.ac.uk}\\
\and
\alignauthor Eleni Mykoniati\\
    \affaddr Dept of Elec. Eng. \\
    \affaddr University College London \\
    \email{e.mykoniati@ucl.ac.uk}\\
\alignauthor David Griffin \\
    \affaddr Dept of Elec. Eng. \\
    \affaddr University College London \\
    \email{dgriffin@ee.ucl.ac.uk}\\
\alignauthor Miguel Rio\\
     \affaddr Dept of Elec. Eng. \\
    \affaddr University College London \\
   \email{miguel.rio@ucl.ac.uk}\\
}
\maketitle

\begin{abstract}
This paper describes TARDIS 
(Traffic Assignment and Retiming Dynamics with Inherent Stability) which
is an algorithmic procedure designed to reallocate traffic within Internet
Service Provider (ISP) networks.  Recent work has 
investigated the idea of shifting traffic in time (from peak to off-peak)
or in space (by using different links).  This work gives a unified
scheme for both time and space shifting to reduce costs.  Particular
attention is given to the commonly used 95th percentile pricing scheme.

The work has three main innovations: firstly, introducing
the Shapley Gradient, a way of comparing
traffic pricing between different links at different times of day; 
secondly, a unified way of
reallocating traffic in time and/or in space; 
thirdly, a continuous approximation 
to this system is proved to be stable.
A trace-driven investigation using data from two service 
providers shows that the algorithm 
can create large savings in transit costs even when only small
proportions of the traffic can be shifted.
\end{abstract}

\section{Introduction}
\label{sec:intro}
Internet Service Providers (ISPs) that are predominantly used by 
residential
users (sometimes called eyeball ISPs) typically have traffic patterns
which are dominated by incoming traffic as their typical user downloads
more than they upload.  Managing
this traffic to reduce cost, network congestion and network instability is a
primary concern of such network operators. 
Traditionally, networks have attempted to manage demand through a 
combination of traffic shaping, artificially curbing demand, 
and traffic engineering through routing optimisations.
Some recent research has considered alternative solutions, moving
the incoming traffic in space (by downloading content from 
different physical locations) 
\cite{xie2008p4p,choffnes2008taming,frank2012cate} or in 
time (by shifting delay-tolerant traffic
to the off-peak) \cite{laoutaris2008good,wong2011time,wong2011time2}.  
Both temporal and spatial traffic shifting share the same underlying 
premise: that reallocating traffic can improve network performance
(by reducing costs, increasing stability or other goals).  This
work all deals with redistributing the traffic into and out of
an ISPs network.  However, usually the assumptions are simply to move
the traffic to a ``cheaper link" or to the ``off peak".  In fact
the trade offs may not be so simple and moving too much traffic
may worsen the situation.  This paper presents a procedure for
pricing the times and locations and an algorithm which shows how 
this price can be used to redistribute traffic in a stable way.
A continuous time approximation of the algorithm is provably stable.

Spatial shifting of traffic is studied in a number of contexts.  It
is often the case that content can be downloaded from different physical
locations.  In some hosting infrastructures as much as 93\% of content 
is hosted in multiple locations and by one estimate 40\% of
traffic could be downloaded from three or more locations \cite{ager2011}. 
Systems in common use which replicate content
across locations include peer-to-peer (P2P)
systems, content distribution networks (CDNs) and one-click hosting
services (OCH). 
In these systems a number of methods have been 
proposed or demonstrated which show that these extra copies can be
exploited to reduce traffic costs or for other engineering goals
\cite{xie2008p4p,choffnes2008taming,frank2012cate}.  Even when content
is available from only a single source then spatial shifting of
traffic is possible by using alternative routes 
\cite{goldenberg2004multihome} in a multi-homed network.

In parallel, multiple papers have explored the potential for
shifting delay-tolerant traffic to off-peak hours.
In \cite{laoutaris2008good} the authors describe a mechanism that offers
users higher bandwidth off-peak if they deliberately delay some of their
traffic.
Further contributions \cite{wong2011time,wong2011time2,chhabra2010home}
represent similar attempts to shift traffic in time through user
incentive schemes.

The papers above (and others) present a number of different alternative 
means to move traffic in either space and time and it seems certain
more will arise in other contexts (for
example, content centric networking explicitly encourages content to be 
available from multiple sources).  This paper presents
a control algorithm which could be used with any of the above systems
alone or in parallel.  The goal presented in this paper is traffic 
cost reduction but other engineering aims could be brought in as well
for example avoiding the onset of congestion on a link.

The contributions of this paper are threefold.
Firstly, the Shapley gradient is introduced, a means 
to compare the costs associated with traffic
flows from different sources at different times and subject to
different pricing schemes.  This is a general mechanism which, while
focused on the 95th percentile common in transit pricing, can be
used for many cost models common for ISPs.
Secondly, a unified mathematical framework is presented for reallocating
traffic across both time and space.  The algorithm shows how traffic
allocation should respond to prices by shifting traffic.  
Thirdly this reallocation strategy is shown to be stable.
The dynamical system representation of this mechanism is shown to converge
to a beneficial state for the system under weak conditions.
The properties of TARDIS are verified through analysis of real traffic
traces from a large European ISP and a Japanese academic
network.

The structure of the paper is as follows.
Section \ref{sec:background} gives background and related work in the area.
Section \ref{sec:pricing} describes an algorithm which can translate groups
of links and pricing schemes to comparable costs for putting traffic
on the network from a given source at a given time.
Section \ref{sec:dynamics} creates a dynamical systems model of the
traffic shifting in response to the calculated costs.
Section \ref{sec:modelling} describes a software model of the system 
and section \ref{sec:results} gives an evaluation using
real user traffic data.
Finally section \ref{sec:conclusions} gives conclusions about the TARDIS
system.

\subsection{A brief description of the TARDIS procedure}

This section briefly describes the TARDIS procedure as a whole.
The focus of the procedure is to control where and when traffic
should be assigned as it arrives on a network.  If a requested
download could be made from a different location or if it could
be shifted in time (by several hours) to the off-peak then the
TARDIS procedure calculates at what time and location the
traffic should be placed in order to minimise costs.
The mechanisms by which this reassignment could occur
are dealt with by many other papers in the literature (as
detailed in section \ref{sec:background}) and several different
mechanisms are proposed. The
exact details of the mechanism used for reassignment do not affect
the TARDIS procedure.  

\begin{figure}
\includegraphics[width=8.5cm]{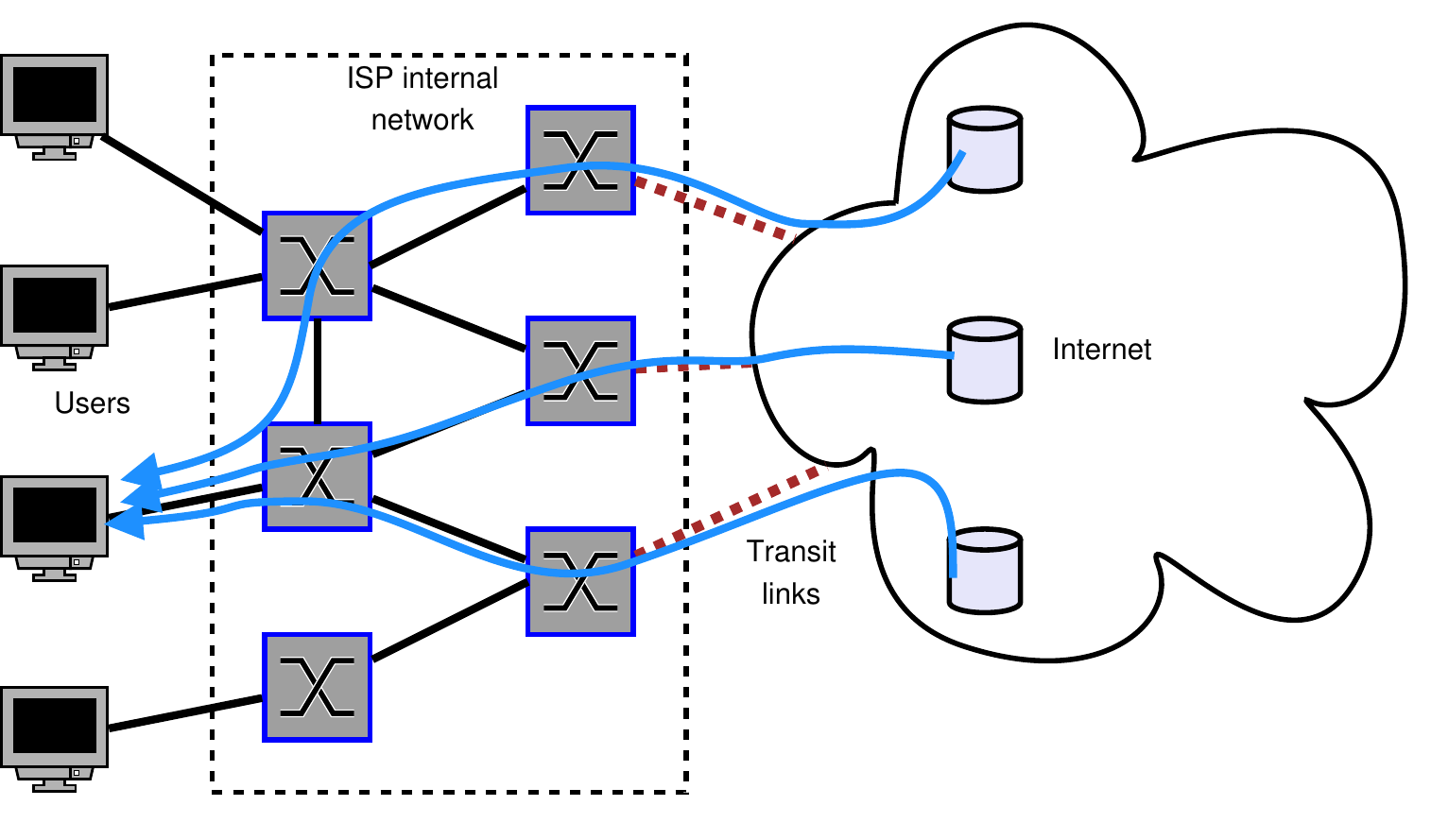}
\caption{An eyeball ISP network where a user can download requested
content from three different transit links.}
\label{fig:network}
\end{figure}

Figure \ref{fig:network} shows the situation envisaged 
where shifting in space is possible.  In this case, the user wishes
to download content which is not available from the ISP's internal
network.  The content is available from three separate transit links
that are charged at different rates.  The decision as to which 
transit link to use is not trivial as it is not always possible to
know which link is ``cheapest" at a given time.
This is discussed in depth in section \ref{sec:pricing}.
Note that, of course, some links may be peering links not transit
links, this does not affect TARDIS if the pricing model is one
which can be addressed by the TARDIS system.  
Consideration to these pricing schemes is
given in sections \ref{sec:fixed_price} and \ref{sec:other_price}.

TARDIS is given as input a cost
model (details of how the ISP is to be charged for its traffic)
and the available choices for a given piece of traffic
(the times and locations to which that traffic could be assigned, which
of course, includes the possibility it cannot move at all).
The TARDIS model then computes splitting rates which describe
the desired proportion of traffic from the choice set which
should be assigned to each time and location.  
Note that obviously
individual flows are not split up to different destinations and
times.

\begin{figure}
\includegraphics[width=8.5cm]{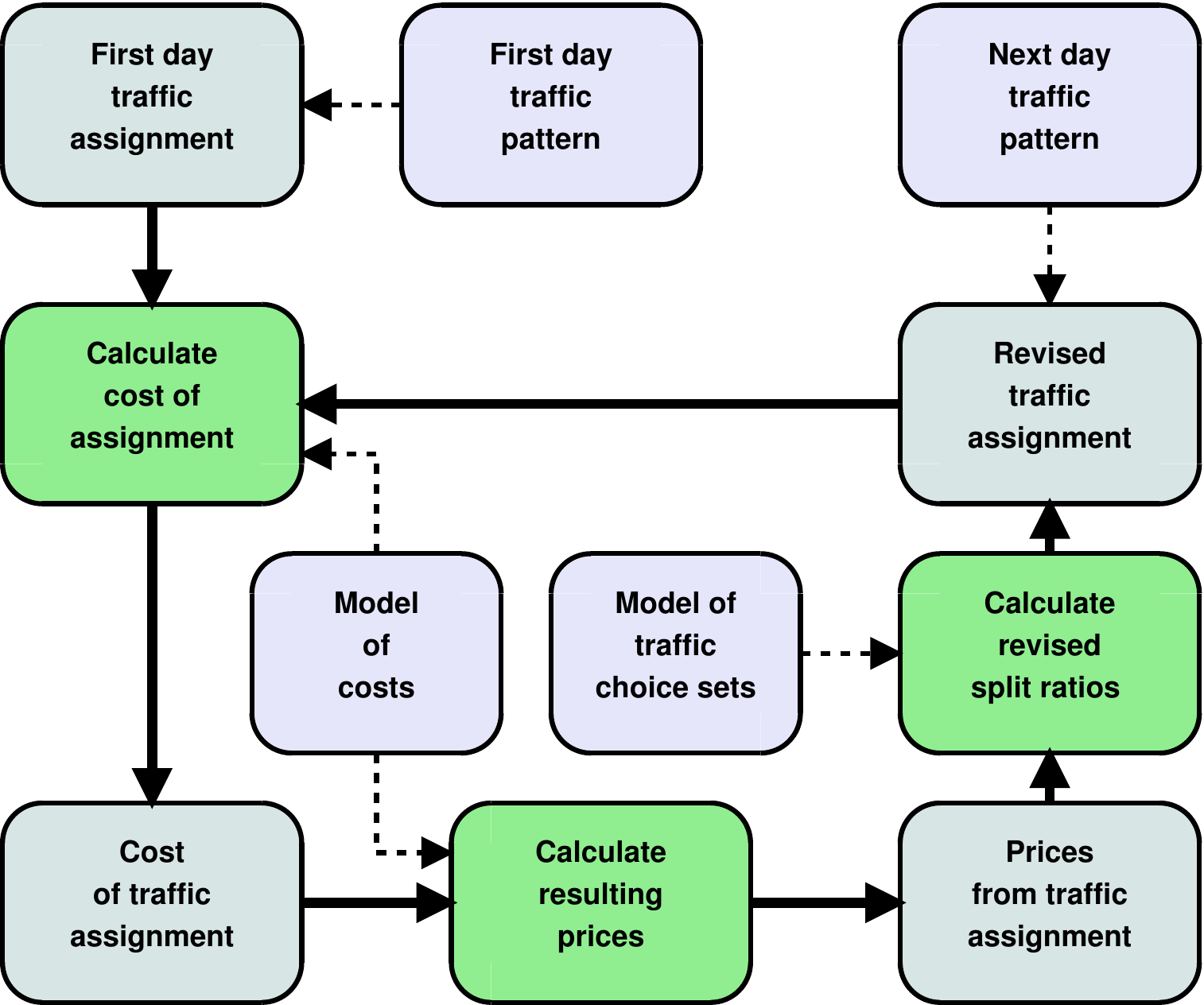}
\caption{The TARDIS scheme for shifting traffic in space and time
to reduce costs.}
\label{fig:scheme}
\end{figure}

Figure \ref{fig:scheme} shows the basic iterative loop which would
be used by TARDIS.  The dotted lines show input, the solid lines
flow of control.  On the first day, traffic is assigned without
modification.  The resulting costs arising 
from this assignment are calculated from a model of how the ISP is
charged for their traffic.  These costs are used to calculate
virtual prices for traffic which is assigned to a given time
and location.  The mechanism for this calculation is described
in section \ref{sec:pricing}.  These prices are used to calculate
splitting ratios which could be interpreted as the probabilities of
traffic choosing between various destinations (or times) picking
a given destination.  The mechanism for calculating the splitting
ratios is given in section \ref{sec:dynamics}.
These splitting ratios will, in combination
with the traffic placed on the network create a new cost.  The new
cost will create new prices and so on.  Obviously it is of primary
importance that this system is stable and this is proved in
section \ref{sec:dynamics} for a continuous time approximation of
the system.

\section{Background}
\label{sec:background}
ISPs pay for traffic on their networks in a variety of ways
either directly through charges levied based upon the traffic level
or indirectly by paying for the cost
of leasing appropriate network capacity.
In transit networks the most common way to charge
for traffic is 95th percentile pricing
(see section \ref{sec:percentile}).  
If 95th percentile pricing is applied to traffic generated by
many users, it is natural to 
ask how much each user contributes to the total 
cost and how this could be mapped to a cost per time period.
One answer to this question is given
in \cite{stanojevic2010heavy}, where the authors use 
the Shapley value to find user cost contributions and least-squares fitting to find
the unit costs that best approximate the Shapley value as a function
of the traffic volume and time of day.  This is described more fully
in section \ref{sec:percentile}.  In \cite{valancius2008howmany}
the authors consider pricing strategies for transit ISPs noting that transit
traffic prices vary according to destination.  They find that the strategy
of discounting local traffic is not optimal and instead suggest an automatic
way to bundle traffic into a small number of pricing tiers according to cost
and demand.  In \cite{lakhina_cost_2012} the authors consider the
benefits of ISPs changing routing decisions to move traffic around their
internal network to reduce cost.  They account for a number of different
network costs including fixed costs, interconnect costs, transit and 
backhaul.  They produce an optimisation model which outputs routing
decisions and uses the Shapley value formulation from 
\cite{stanojevic2010heavy} to assess traffic where 95th percentile
billing is used.  The work is considerably different in scope and
intent to this paper.  Their aim is to produce a comprehensive
cost model and demonstrate that routing decisions in the internal
network could save costs given a fixed traffic matrix.  This paper 
takes a simpler cost model
and focuses on the strategies which could be used to save costs
by altering the traffic matrix.  In fact, the cost
models from \cite{lakhina_cost_2012} could be used as an input
to the TARDIS system to improve its ability to shift traffic on
the internal network as well as external traffic.  The costs
models formulated in \cite{lakhina_cost_2012} could all be
translated simply to the Shapley gradient formulation in section
\ref{sec:pricing}.

A number of authors have investigated how eyeball ISPs can reduce their
traffic costs by incentivising users to delay downloads.
In \cite{wong2011time} and \cite{wong2011time2}, the authors describe a
model which uses a control loop to adapt the prices that ISPs charge users
in response to changes in their bandwidth consumption.
This provides an incentive for users to shift part of their traffic to
off-peak times.
Another deferred download scheme is the Internet Post Office 
\cite{laoutaris2008good} in
which users request files and the ISP downloads them off-peak and temporarily stores
them so that users can quickly retrieve the local copies
when they next log on.
The idea is further developed in \cite{chhabra2010home} which uses real
user data to estimate the cost reductions provided by such time
delays.

Spatial shifting of external traffic is studied in various contexts.
Routing via different transit links in a multi-homed network (to/from
the same destination) is studied in
\cite{goldenberg2004multihome} which solves the problem of rerouting
to reduce network costs either for
percentile based pricing or linear cost pricing but not a mixture.
In \cite{CooperativeISPCDN} the authors investigate cooperation between
content providers and ISPs for traffic engineering and to improve
server selection by choosing which replica to download from.  
Their system shows several benefits from merely information sharing about
connections and larger benefits from more active cooperation in choice
of physical location for connections.  

For peer-to-peer systems a number of systems exist. 
ALTO/P4P \cite{xie2008p4p} was tested in simulation and experiment and 
shown to reduce network transit costs.  ONO \cite{choffnes2008taming} which was
deployed and reduced inter ISP traffic by changing peer selection.
The ALTO/P4P approach is to introduce an interface
between ISPs and overlay applications with the purpose
of facilitating the selection of overlay nodes based on locality.
In \cite{rumin2011deep} the authors assess the extent to which BitTorrent
swarms can be \emph{localised}, i.e. downloads can be kept within an ISPs
own network, reducing the ISPs transit costs.
The authors consider various strategies to bias overlay
topology construction towards local peers, and develop the concept of
inherent localisability, which assesses the download performance of
swarms using largely local connections within an ISP.
Unfortunately, the degree of localisability depends heavily both on the
nature of the torrent and the ISP.

Opportunities for space shifting are also widely recognised within 
Content Distribution Networks (CDNs),
which present both high content replication and transparency in traffic redirection.
In \cite{poese2010improving}, for example, alterations are made to DNS
servers in order to serve traffic from different CDN hosts transparently to
the end user.
In \cite{frank2012cate} the authors propose content aware traffic
engineering (CaTE), which allows ISPs to take advantage
of content available in multiple locations to reduce link utilisation.
The authors show that the gains can be substantial: more than 32\% of the
traffic in their dataset can be be delivered from at least 8 different
subnets, and almost 40\% of traffic can be obtained from 3 or more
locations.  This estimate focuses on traffic from major providers and does
not include, for example, P2P.

One-click hosting services could provide another opportunity for
spatially shifting traffic as they contribute a large proportion
to traffic share and many are multi-homed \cite{antoniades2009}.

The proposed TARDIS algorithm could be used in combination with any
of the systems from \cite{poese2010improving,frank2012cate,xie2008p4p,choffnes2008taming,wong2011time,wong2011time2,lakhina_cost_2012,laoutaris2008good}
as a control system to suggest which reallocations would most benefit 
the ISP.

\section{Definitions}
\label{sec:definitions}

Some definitions are made here to simplify further discussion and to
gather terms for the rest of the paper.

Define a \emph{traffic pricing group} (TPG) as traffic grouped 
together for pricing purposes.
In the simplest case this may be traffic over a single physical link.
It may also represent several links where the traffic is aggregated to
produce a final price
or even a subset of traffic on a single link priced differently.
The latter arises for example when providers charge for
national and international traffic at different rates,
a practice often referred to as unbundling \cite{valancius2008howmany}.

Define a \emph{traffic slot} as a specific TPG associated with a
specific time window.
Hence, a set of traffic slots represents a choice of TPGs and time windows
to which traffic volumes could be allocated.
For example, one set of slots may represent the possibility of downloading
from a fixed location at any time in the next eight hours;  another may
represent the possibility of downloading only during the current time
window, but from many different TPGs.

The main variables used in the next sections can be found in the following
table.

\begin{center}
\footnotesize
\begin{tabular}{c|p{7cm}}
$v(\mS)$ & Total cost of a TPG for traffic from users in set $\mS$\\
$p_i$ & price of using slot $i$\\
$\mC_j$ & Set of slots making up the $j$th choice set\\
$d_i$ & Traffic demand in bytes for traffic which could be assigned to any
slot in $\mC_i$ \\
$X_{ij}$ & Traffic flow in bytes which could be assigned to any slot in
$\mC_i$ and is assigned to slot $j$\\
$s_{ij}$ & the proportion of traffic which could be assigned to any slot in
$\mC_i$ and is assigned to slot $j$
\end{tabular}
\end{center}

The following notational conventions are used throughout this paper.
Lower case bold (e.g. $\mathbf{x}$) indicates vectors.
Calligraphic script indicates sets (e.g. $\mathcal{S})$.
Upper case bold indicates a matrix or a vector of sets (e.g.
$\mathbf{X}$).

\section{Pricing}
\label{sec:pricing}
ISPs typically pay a number of different costs for traffic entering
and leaving their network their network.  These may include fixed 
costs such as provisioning a link of fixed capacity on their 
internal network, or to an Internet eXchange Point (IXP) and 
variable costs, for example a transit
link which is charged according to the volume of traffic used.
Of particular importance is the 95th percentile pricing scheme,
described in section \ref{sec:percentile}.  The nature of
the 95th percentile pricing scheme makes it difficult to
deal with analytically (as will be explained in the next section).  
In previous work \cite{stanojevic2010heavy} 
the Shapley value has been used to estimate the price of a particular
time and link.  

Through this section the words ``price" and ``cost" will be used with
very specific meanings.  The cost of a traffic pattern is an outcome of
the shape of the traffic and the pricing scheme imposed -- it is the
monetary value which would actually be paid for that traffic using
that scheme.  The price
will be used in this section to mean a notional marker for a given
traffic slot which indicates the likely cost impact of assigning traffic
to that slot.  This price is used as an internal mechanism to work out
which traffic slots should be avoided.  If small amounts of traffic are
moved from a high price slot to a low price slot then it would be
expected that the cost would drop.  

The following properties are useful for the price  
chosen:
\begin{enumerate}
\item The price can be quickly calculated.
\item An increase of traffic in a slot never causes the price of
that slot to fall (monotonic with traffic).
\item The price is differentiable with respect to traffic.
\end{enumerate}
The first condition is for practicality.  The second condition simply
says adding traffic never makes the price go down.
The second and third conditions are used in the stability proof in section 
\ref{sec:dynamics}.

In this paper the Shapley
gradient is introduced to solve this problem.  
This can be considered to answer the question
``what is the likely increase in cost which would be caused by adding
traffic to this link?" in a robust way which can account for a wide
variety of pricing schemes
including, naturally, the 95th percentile.

\subsection{95th percentile pricing in brief}
\label{sec:percentile}

For transit traffic, ISPs are commonly charged for
the 95th percentile of their traffic.  This works as follows.
For each TPG
the 95th percentile cost is set to some fixed value
e.g. $r$ dollars per GBps (note that this is a rate not an
absolute value). For the charging
period $T$ (a typical
value is a month)
the traffic is divided into smaller  
time windows of length $t$ (often 5 minutes is used).
For each window $i$ the average traffic rate $f_i$ is calculated 
(it is the total traffic in that window divided by $t$ the window
length).
Define $f^{(95)}$ to
be the traffic rate $f_i$ such that only $5\%$ of the 
$f_i$ are larger than $f^{(95)}$.
The price charged for the period $T$ is then simply $r f^{(95)}$.
It is usual that inbound and outbound traffic are tracked
separately, and only the largest charged (see \cite{dimitr2009perc}).
For eyeball ISPs this will almost always be the inbound traffic as this
is larger in volume than the outbound traffic.
The 95th percentile pricing provides a particular challenge for any scheme 
which aims to reassign traffic.  In particular the question ``how
does adding traffic to this slot affect the amount paid?" becomes
problematic.  Adding or taking away a small amount of traffic to any
slot has no affect on the cost unless that slot is one with 
traffic level $f^{(95)}$ for that TPG.  A more subtle analysis is
required and this is provided by building on the Shapley value as
studied in \cite{stanojevic2010heavy}.  

\subsection{Calculating prices using the Shapley gradient}
\label{sec:shapleygradient}

Consider the traffic in a single TPG with a known pricing scheme and
with $N$ users.
Let
$v(\mS)$ be the total cost which would be paid using
this scheme for the traffic 
generated by a some set of users $\mS$.  Define $\mN$ as the
set of all $N$ users.
The Shapley value is a concept from game theory which assesses the
contribution of a user's strategy to an overall cost/benefit.  The
Shapley value 
(see \cite{stanojevic2010heavy}) of
the $i$th user is defined as 
\begin{equation}
\phi_i = \frac{1}{N!} \sum_{\pi \in \mS_{\mN}} [v(\mS(\pi,i)) - 
v(\mS(\pi,i) \backslash i)] \mbox{,}
\label{eqn:shapley}
\end{equation}
where $\mS_{\mN}$ is the set of all $N!$ possible permutations of $\mN$,
$\pi$ is one such permutation, and $\mS(\pi,i)$ is the set of users who 
arrive not later than $i$ in the permutation $\pi$. 
Intuitively, \eqref{eqn:shapley} can be interpreted as randomising the order of users,
estimating the cost incurred by each user $i$ and averaging this cost over all possible 
user orderings. 
In \cite{stanojevic2010heavy} the Shapley value of a user's traffic
is used to assign a cost to each hour of the day which reflects the possibility
of traffic in that slot contributing to an increased price.
The full calculation of the Shapley
value \eqref{eqn:shapley} requires considering $N!$ combinations
of user traffic to analyse traffic from $N$ users but
\cite{stanojevic2010heavy} shows that a relatively 
small sampling gives
an efficient, unbiased estimator (in their work 1,000 orderings produced
a low error estimate) and this sampling technique is used here.

In \cite{stanojevic2010heavy} this value (that differs for every user)
is combined with a least squares fit to get an average cost to assign
to each hour.  However, this procedure is computationally intensive 
(calculating the Shapley value for every user and then doing a least
squares fit) and does not produce a cost which can be compared 
to other pricing schemes.
What is required is some measure of the cost of adding a small amount
of traffic on a given TPG at a given time.  This is achieved by
considering the gradient of the Shapley value.

Define the \emph{Shapley gradient} as the rate of
increase in cost when a fictitious user $N + 1$
injects an additional small amount of traffic $du$ in slot $j$.  
The Shapley gradient is therefore,  
\begin{align}
\nonumber G_j  &=  
\frac{1}{du\,(N+1)!} \\ 
& \sum_{\pi \in \mS_{\mN'}} [v(\mS(\pi,N+1)) - 
v(\mS(\pi,N+1) \backslash N+1) ],
\label{eqn:Gjgeneral}
\end{align}
where 
$\mS_{\mN'}$ is the set of arrangements of the
users $\mN$ plus the fictitious $(N+1)$th user and 
$\mS(\pi,N+1)$ is the set of all users arriving
not later than user $N+1$ in the permutation $\pi$.

Define the Shapley gradient for an individual user $i$ and
slot $j$ as 
$\phi'_{ij}= (\phi_i(d_{ij}) - \phi_i) / du$ where
$\phi_i(d_{ij})$ is the Shapley value for user $i$ with
extra traffic $du$ in slot $j$. For the traffic schemes considered
here this can be shown to be approximately
the mean of $\phi'_{ij}$ over all users $i$ with 
the error term O(1/N).  In
all but the 95th percentile case $\phi'_{ij} = G_j$ for all
$i$.  Details are given in appendix \ref{sec:shap_indep}.

\subsection{Costs for various pricing schemes}

In this section the Shapley gradient is calculated for various
pricing schemes.  The Shapley gradient is a quite general
concept and works for any scheme where the Shapley value
is differentiable.  This condition amounts to saying that the
pricing scheme is such that there is no step change with
induced traffic.  This would not be the case with, for example,
a scheme which charged a fixed rate up to a given amount of
traffic and then a higher rate above that amount.
In the section \ref{sec:dynamics} it will also be useful
that schemes are differnentiable with respect to added traffic.

\subsubsection{Shapley gradient for linear pricing}
\label{sec:linear_price}

For traffic in a slot $j$ charged with 
linear pricing at rate $r_j$ then 
\eqref{eqn:Gjgeneral} reduces to simply $G_j = r_j$ as
would be expected.  
The equation becomes 
$G_j = \frac{1}{du \, (N+1)!} 
\sum_{\pi \in \mS_{\mN'}}  r_j du$
since the cost of adding $du$ traffic to a slot $j$ which
is priced linearly at $r_j$ is $r_j \, du$ and hence
$v(\mS(\pi,N+1)) - 
v(\mS(\pi,N+1) \backslash N+1) = r_j \, du$.
Since there are
$(N+1)!$ members of $\mS_{\mN}$ the equation reduces
to $G_j = r_j$.  

\subsubsection{Shapley gradient for 95th percentile pricing}
\label{sec:shap95}

Let $\mT^{(95)}(\mS'(\pi))$ be the set of all
time periods which have a value equal to the 95th percentile value
of the traffic up to and including user $N+1$ in arrangement $\pi$.
Let $I(X)$ be an indicator variable -- that is a variable which takes
the value $1$ if X is true and $0$ if X is false.
Consider a slot $j$ charged at
95th percentile at rate $r_j$.  
For some arrangement of traffic $\mS(\pi,N+1)$ then there are two
possibilities.  If, for the traffic profile $\mS(\pi,N+1)$, then 
the flow in slot $j$ is equal to $f^{(95)}$ then adding $du$ to
slot $j$ increased the cost by $r_j \, du$.  In all other cases
then adding $du$ did not increase the cost.  Therefore,
$v(\mS(\pi,N+1)) - 
v(\mS(\pi,N+1) \backslash N+1) = I (j \in \mT^{(95)}(\mS(\pi)) r_j \, du$.
Intuitively this says that adding traffic $du$ to slot $j$ in
an arrangement of traffic increases the 95th percentile 
cost excactly when the traffic has been added to the 95th percentile
slot and at no other times.
Hence,  
\eqref{eqn:Gjgeneral} can be rewritten as
$$
G_j = \frac{1}{du\, (N+1)!}  
\sum_{\pi \in \mS_{\mN'}} [I (j \in \mT^{(95)}(\mS'(\pi)) r_j \, du ].
$$ 
This then gives
\begin{equation}
G_j = \frac{r_j}{ (N+1)!} \sum_{\pi \in \mS_{\mN'}} 
[I (j \in \mT^{(95)}(\mS'(\pi))) ].
\label{eqn:Gj}
\end{equation}
As in \cite{stanojevic2010heavy}, only a small sample
of all combinations in $\mS_{\mN'}$ need be calculated and
this can be computed efficiently see \cite{stanojevic2010heavy}
and \cite{lakhina_cost_2012} for further information on the
performance of the estimation.

The expression \eqref{eqn:Gj} has the following properties useful to construct the 
slot prices $p_j$.
\begin{itemize}
\item When summed over all time windows associated with a TPG, 
the result is the 95th percentile price actually paid for that TPG
with that traffic.
\item It reflects the likelihood of adding traffic in a given time
slot increasing the 95th percentile cost.
\item Off-peak slots have near zero $G_j$.
\end{itemize}

\subsubsection{A useful approximation for 95th percentile pricing}

A problem remains with the 95th percentile $G_j$ as defined in \eqref{eqn:Gj}.  
Moving traffic to a very busy period is just as valid a strategy for 
cost reduction as moving it to a quiet period.  
This follows because shifting traffic from busy time periods to quiet
ones will lead to a lower 95th percentile, and hence, to a reduction in costs.  
However, shifting traffic to the busiest time periods, so that it 
falls in the top 5\%, could also reduce costs.
In practice such a policy would impact end-to-end performance adversely.
The price function is therefore modified by changing $\mT^{(95)}$ 
in \eqref{eqn:Gj} to $\mT^{(>95)}$, the set of slots with a traffic 
level equal to or greater than the 95th percentile level and 
normalisting by the size of this set.
$$
G_j = \frac{r_j}{ (N+1)!} \sum_{\pi \in \mS_{\mN'}} 
\left[\frac{I (j \in \mT^{(>95)}(\mS'(\pi)))}
{|\mT^{(>95)}|}
\right].
$$

This alteration has the added benefit of making the slot price $p_j$ 
monotone as a function of the traffic in that slot.  The results
for TARDIS use this as the price function but use the unmodified 
95th percentile as the cost function.  This has the benefit of not
inducing unrealistic traffic profiles although it means that cost
reduction is not sought as aggressively as it might be.

It will later be a useful property that the price
of a slot is monotonically non decreasing but also that it is continuously 
differentiable.  This can be achieved by constructing the following
approximation 
\begin{equation}
G'_j = \frac{r_j}{ (N+1)!} \sum_{\pi \in \mS_{\mN'}}
\frac{A(j,\pi)}{\sum_{k} A(k,\pi)},
\label{eqn:Gjdash}
\end{equation}
where the $k$ sum is over all time windows and
$A(j,\pi) = 1$
if $t(j,\pi) > f^{(95)}(\pi)$ and
$\exp[(-(t(j,\pi)- f^{(95)}(\pi))^2/\sigma^2]$ otherwise.
Here $t(j,\pi)$ is the traffic level in slot $j$ counting traffic up to the $N+1$th user in arrangement $\pi$ and $f^{(95)}(\pi)$ is the 95th percentile level for the traffic up to this user.
The parameter $\sigma \geq 0$ is akin to variance.  This effectively fits a Gaussian shape to the price for traffic below the 95th percentile level.  The fall off is controlled by $\sigma$ and as $\sigma \rightarrow 0$ the approximation 
to the previous formula becomes exact. 

\subsubsection{Shapley gradient for fixed pricing with a maximum bandwidth}
\label{sec:fixed_price}

A common cost model on links is to pay a fixed cost for bandwidth
up to a given cap.  This could occur, for example, if the ISP pays
for a link to an IXP with a given rate.  Another situation where this
would occur is an internal link within the ISP which has fixed capacity
and must not be overloaded.  As the fixed cost is already paid, the
price for putting traffic on the link is zero as long as the traffic
remains below the cap.

A naive approach to this cost system presents a problem for the
TARDIS system as the cost would be zero up to the capacity
then an infinite cost at that capacity when
the link fails (more realistically, the cost would approach
infinity, link failure, as the link approaches its maximum utilisation). 
Fortunately, a number of 
pricing schemes are possible which allow modelling an approximation
to this cost function.  The following scheme is inspired by the
well-known result from queueing theory that the mean queue length
for an M/M/1 queue with utilisation $\rho \in (0,1)$ is given by
$\E{N} = \rho/(1-\rho)$.

Let $m$ be the maximum traffic rate allowed on a link and 
$\alpha \in (0,1)$ be some
proportion of that rate which can always be tolerated (for example
if $\alpha = 0.8$ the link is considered unpriced until
its utilisation is 80\%).  The cost for
a slot $j$ carrying flow $f_j$ could be approximated by
$$
c_j = 
\begin{cases}
0 & 0 \leq f_j < \alpha m \\
\frac{f_j-\alpha m}{m-f_j} &  \alpha m \leq f_j < m.
\end{cases}
$$
This function gives a cost 0 up to $\alpha m$ and then rising
until the cost approaches infinity as the traffic in the
slot approaches $m$.
The Shapley gradient will simply be the differential of this.
If $f_j$ is the flow on slot $j$ then the price of that slot
is given by the Shapley gradient
$$
G_j =
\begin{cases}
0 & 0 \leq f_j < \alpha m \\
\frac{(1-\alpha)m}{(m-f_j)^2} &  \alpha m \leq f_j < m.
\end{cases}
$$
This is equivalent to a price which is zero (or fixed and finite)
for traffic less than $\alpha m$ and rising rapidly to infinity as
$f_j$ approaches $m$.

\subsubsection{Shapley gradient for other pricing schemes}
\label{sec:other_price}

It is important to consider the case where more than one
cost constraint affects traffic.  For example, in figure 
\ref{fig:network} traffic must cross the ISP internal network
after the transit links.  The ISP internal network may have bandwidth
constraints on links which form an additional part of the problem in
addition to minimising the cost from external links.  In this case
the cost of using the external link could be modelled as the sum of
the cost from transit or peering plus a fixed price weight as in 
the previous section when the associated internal link is running 
near capacity.  If it was known that a certain transit link experienced
congestion at some times then this scheme could also be used to limit
such assignment.  The TARDIS system is relatively flexible in the
cost/pricing which can be assigned as long as the cost can be 
approximated by a differentiable non decreasing function of the
traffic.

The previous sections cover a large number of pricing schemes, however
one practice not yet dealt with is the idea of pricing bands.
For example, traffic might be charged at a particular rate (either 
linearly, fixed cost or at 95th percentile) up to a given level and then at a
lower rate beyond this level.  This presents a problem for the
TARDIS system in two ways: firstly, the cost is no longer non-decreasing,
at a certain point adding traffic reduces the cost; secondly these
pricing bands are often pre-agreed, it would not be realistic to have
an automated system switch between pricing tiers according to changing 
traffic patterns.  Therefore such a cost model would be dealt with
within TARDIS by assuming that the current price band is an input and
the decision to move up or down a price band is an externally made
engineering decision (made by humans or by expert systems) which can
be fed into the TARDIS system if the decision is made to change the 
price band.

\section{Dynamical systems approach}
\label{sec:dynamics}
Having assigned a price $p_i$ to each slot $i$, this section 
presents a mechanism to allocate traffic to each slot so that 
the total cost is reduced.  This is equivalent to changing
the time and/or TPG of a piece of traffic.  In
this section a 
dynamical system is formulated for which a solution with a stable 
equilibrium is provided.  This system is inspired by a
system developed in the context of road traffic 
\cite{smith1984stability}.

A naive approach would be to assign all traffic to the cheapest slot 
available.  This, however, could easily cause problems.  Shifting a 
significant amount of traffic in this manner may inflate slot price 
excessively,
potentially reducing traffic shifting to a small subset of slots 
with wildly fluctuating prices, in a situation similar to route flapping.

In addition to the potential for oscillation, the problem
is further complicated by the fact that not all traffic can 
be allocated to all slots.  Some (possibly large) proportion of 
the traffic can only be assigned to a single slot, being delay-intolerant and
having no alternative download locations. Other traffic may impose different 
restrictions, such as only being allocatable to slots in the same time window
(space shifted but not time shifted),
or to slots associated with the same location 
(time shifted but not space shifted).
The approach chosen in this paper is to define choice sets, the sets
of slots that a user's traffic can take.  This could be:
\begin{enumerate}
\item a single slot (for traffic which cannot be moved),
\item a set of different TPG at the current time (for traffic which can move in
space but not time),
\item a set including the current time and a number of subsequent times
up to a given maximum delay for a single TPG (for traffic which can
move in time but not space),
\item the cartesian product of 2) and 3) for traffic which can shift
in both space and time.
\end{enumerate}

\subsection{The stable assignment problem}
\label{sec:problem_definition}

Define a choice set $\mC$ as a set of slots to which a given unit
of traffic may be allocated.  
Let $\mC_i$ be the $i$th such choice set, 
and assume that each unit of traffic demand has an associated choice set
$\mC_i$.
Let $d_i$ be the amount of traffic within
one period which 
is free to choose among these choice sets and assume this is 
fixed. 

Let $X_{ij}$ be the customer traffic demand that can be 
allocated to slots in choice set $\mC_i$ and is in fact allocated 
to a slot $j$ (naturally, $X_{ij}= 0$ if 
$j \not \in \mC_i$).  Let $\bX$ be the matrix of all such $X_{ij}$.
For each slot $i$, the price $p_i$ is a function of the assigned traffic
(in 95th percentile the slot price depends on the assignment to
all slots in its pricing period).  Write this explicitly as
$p_i(\bX)$.  
\begin{defn}
Let $\mC_i$ be a choice set and $\bX$ be the assignment
of traffic within all choice sets.
The choice set is said to have an
\emph{equilibrium assignment} (or to be in equilibrium)
if $p_k(\bX) > p_j(\bX)$ 
implies $X_{ik} = 0$.  The system is said to be in
equilibrium if all choice sets are in equilibrium.
\label{defn:eqbm}
\end{defn}
In other words, at equilibrium, all the traffic is assigned
to one or more slots all of which have equal cheapest price.
This definition corresponds to the user 
equilibrium of road traffic analysis \cite{wardrop1952aspects}, 
or Wardrop equilibrium of the first type.
It embodies the idea that at equilibrium no traffic can be assigned
to a lower priced slot.

The \emph{stable assignment problem} can now be stated in general
terms: find a method
of allocating flows to slots within their choice set such that the prices
and flows resulting from such an assignment are an equilibrium assignment
for all choice sets.  Evidently, this represents an equilibrium
state because once it is reached, no traffic could be assigned to a 
different, lower price slot.

\subsection{A continuous dynamic model of slot choice}
\label{sec:basicmodel}

The model is now formulated in terms of assigning
traffic (choosing the $X_{ij}$) 
to achieve equilibrium by moving traffic within its choice
set to lower price slots.
The demand $d_i$ represents the amount of traffic which
can choose slots in $\mC_i$ and is $d_i = \sum_{j \in \mC_i} X_{ij}$.
Furthermore, since the $p_i$ will depend on the $X_{ij}$  
the problem of creating a stable dynamic system is then that of
creating an adjustment process for the $X_{ij}$ that moves
the system towards equilibrium. 
In practice both the $X_{ij}$ and the $p_i$ would be available only 
at the end of each pricing period $T$.
For analytic tractability, however, a continuous time formulation
is now given.

The proposed adjustment strategy is as follows.  If $p_k > p_j$, 
traffic which can be shifted from slot $k$ to slot $j$ will do so at a rate proportional
both to the price difference $p_k - p_j$ and to the amount of traffic 
$X_{ik}$ currently using slot $k$.  
Consider a toy example with two slots, 1 and 2, and a single 
choice set $\mC_1 = \{1,2\}$ (i.e. all traffic can choose between both slots).  
Assume some initial assignment $X_{1\,1}$ and $X_{1\,2}$.
If $p_1 > p_2$, then the rate of change of the two variables would be
$\dt{X}_{1\,1}= - X_{1\,1} (p_1 - p_2)$ and 
$\dt{X}_{1\,2}= X_{1\,1} (p_1 - p_2)$, where $\dt{X}$ indicates
time differentiation.  
This means that traffic will be re-allocated 
from slot 1 to slot 2: $X_{1\,1}$ will decrease
and $X_{1\,2}$ will increase, leading to 
an increase in $p_2$, a decrease in $p_1$ and a reduction in 
both $\dt{X}_{1\,1}$ and $\dt{X}_{1\,1}$ as the system approaches 
equilibrium. 
Since $X_{1\,1}$ decreases by as much as $X_{1\,2}$ 
increases, these equations are flow preserving;
in addition, it can be seen that 
if given positive initial values, 
neither $X_{1\,1}$ nor $X_{1\,2}$ will produce negative values. 

More generally, for all $j \in \mC_i$,
\begin{equation}
\dt{X}_{ij} = \sum_{k \in \mC_i} \left[ X_{ik}(p_k-p_j)_+ -
X_{ij}(p_j-p_k)_+\right],
\label{eqn:dynsys}
\end{equation}
where the notation $[x]_+$ means $\max(0,x)$.  The first expression
inside the sum shifts traffic towards $j$ if $p_k > p_j$,  and 
the second one shifts traffic away from $j$ in the opposite 
case.
If choice set $\mC_i$ has only one member, the sum is empty and 
$\dt{X}_{ij} =0$
as expected since $X_{ij}$ has no freedom to move. 

For reasons of space, only an outline of a stability 
proof can be given in this
paper; it proceeds as follows.  
The evolution of the assignments $X_{ij}$ can be described
with the differential matrix equation
$\dt{\bX} = 
= \Phi(\bX)$ which can be decomposed into equations for
each choice set $\dt{\bx_i} = \Phi(\bx_i)$ where
$\bx_i$ is the $i$th row from $\bX$ -- that is the
vector of flows in choice set $\mC_i$ that is 
$\bx_i = (X_{i\,1}, X_{i\,2}, \ldots)$ (noting that $X_{i\,j} = 0$
if $j \notin \mC_i$.
If the system is at a Wardrop equilibrium
then $\dt{\bX}(t) = \mathbf{0}$ 
since for any $\mC_i$ then in \eqref{eqn:dynsys} 
the term
$X_{ik}(p_k-p_j) = 0$ and $X_{ij}(p_j-p_k)= 0$ as either the price
term or the flow term is zero. 
Let $\bD$ be the set of $X_{ij}$ which are \emph{demand feasible}
(that is $\sum_j X_{ij} = d_i$ for all $i$) and assume that the
system starts in a demand feasible state.
The following theorem that is a variant of
\cite{smith1984stability} applies.
For set $\mC_i$ define $\bx_i$ as \emph{demand feasible} if
$\sum_j X_{ij} = d_i$ and (slightly abusing notation) $\bx_i \in
\bD$.  Define $\nabla_i(x)$ as the gradient over the vector
space of $\bx_i$ that is 
$\nabla_i(\bx) = (\partial \bx/X_{i\,1}, \partial \bx/X_{i\,2}, \ldots)$.

\begin{theorem}
If $\Phi(\bX)$ is continuously differentiable, the dynamical system 
$\dt{\bX} = \Phi(\bX)$ is Lyapunov stable if there is a 
continuously differentiable set of functions $V(\bx_i)$ over
$\bD$ such that for all $i$
\begin{enumerate}
\item $V(\bx_i) \geq 0$ for all $\bx_i \in \bD$.
\item $V(\bx_i) = 0$ if and only if $\bx_i$ is in equilibrium and 
\item $\nabla_i V(\bx_i)\cdot \Phi(\bx_i) < 0$ for all $i$ if $\bX$ is not in equilibrium.
\end{enumerate}
\label{theorem:stability}
\end{theorem}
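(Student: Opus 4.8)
The plan is to run the classical Lyapunov argument, aggregating the per-choice-set functions $V(\bx_i)$ into a single candidate Lyapunov function for the whole system. Because the equilibrium guaranteed by Definition~\ref{defn:eqbm} is in general a \emph{set} rather than an isolated point, the target is stability of the equilibrium set $\mathcal{E} = \{\bX \in \bD : \bX \text{ is in equilibrium}\}$, so throughout I would phrase positive-definiteness and the decrease condition relative to $\mathcal{E}$ rather than to a single fixed point.

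First I would set $\tilde V(\bX) = \sum_i V(\bx_i)$ and check the two algebraic hypotheses. Condition~1 gives $\tilde V(\bX) \geq 0$ on $\bD$ as a sum of nonnegative terms, and condition~2 gives $\tilde V(\bX) = 0$ if and only if every row $\bx_i$ is in equilibrium, which by Definition~\ref{defn:eqbm} is exactly the statement that $\bX \in \mathcal{E}$. Hence $\tilde V$ vanishes precisely on the equilibrium set and is strictly positive off it, which is the positive-definiteness a Lyapunov function requires.

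Next I would differentiate $\tilde V$ along trajectories of $\dt{\bX} = \Phi(\bX)$. Using the row decomposition $\dt{\bx}_i = \Phi(\bx_i)$ and the chain rule,
\begin{equation}
\frac{d}{dt}\tilde V(\bX) = \sum_i \nabla_i V(\bx_i) \cdot \Phi(\bx_i),
\label{eqn:Vdot}
\end{equation}
which condition~3 forces to be strictly negative whenever $\bX \notin \mathcal{E}$ (at a full equilibrium each summand vanishes because $\Phi(\bx_i) = \mathbf{0}$). Before invoking this I would verify that trajectories remain inside $\bD$, so that $\tilde V$ is well defined and continuously differentiable along them: the flow-preservation already noted for \eqref{eqn:dynsys} keeps $\sum_j X_{ij} = d_i$ for every $i$, and the same sign argument given for the toy example keeps each $X_{ij}$ nonnegative, so $\bD$ is forward-invariant.

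With these ingredients the conclusion follows from the standard Lyapunov stability theorem: a continuously differentiable function that is positive-definite relative to $\mathcal{E}$ and whose time derivative along the flow is nonpositive (here in fact strictly negative off $\mathcal{E}$ by \eqref{eqn:Vdot}) certifies that $\mathcal{E}$ is Lyapunov stable, the strict decrease giving additionally that $\mathcal{E}$ is attracting. The step I expect to be the main obstacle is the careful handling of the equilibrium \emph{set}: the per-$i$ phrasing of condition~3 cannot hold literally when some individual $\bx_i$ already sits at its own equilibrium while $\bX$ as a whole does not, so I would read the decrease condition in its aggregate form \eqref{eqn:Vdot}, namely $\frac{d}{dt}\tilde V < 0$ off $\mathcal{E}$, and defer the genuinely delicate task, actually exhibiting functions $V(\bx_i)$ meeting conditions~1--3 for the non-smooth 95th-percentile prices, to the subsequent construction.
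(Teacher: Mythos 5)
Your proposal is correct in substance, but it takes a genuinely different route from the paper's. The paper does not aggregate: it observes that the demand-feasible set $\bD$ decomposes by choice set (a matrix $\bX$ is demand feasible if and only if each row $\bx_i$ is), and then applies the theorem from the appendix of Smith's road-traffic paper \cite{smith1984stability} to each component $\bx_i$ \emph{separately}, via an epsilon--delta argument: on the set $\bD_\epsilon = \{\bx \in \bD : V(\bx) > \epsilon\}$ there is a uniform $\delta > 0$ with $\nabla V(\bx)\cdot\Phi(\bx) < -\delta$ (this uniformity comes from compactness of $\bD_\epsilon$), so the trajectory exits $\bD_\epsilon$ in finite time, and letting $\epsilon \rightarrow 0$ each choice set converges to its own equilibrium, hence the whole system does. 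You instead form the single function $\tilde V(\bX) = \sum_i V(\bx_i)$ and invoke the standard Lyapunov theorem for stability of the equilibrium set $\mathcal{E}$. Your route is more self-contained and, notably, it forces you to confront a real defect in the statement that the paper's per-component proof quietly sidesteps: condition~3 as written ($\nabla_i V(\bx_i)\cdot\Phi(\bx_i) < 0$ \emph{for all} $i$ whenever $\bX \notin \mathcal{E}$) is unsatisfiable, since a row already at its own equilibrium has $\Phi(\bx_i) = \mathbf{0}$ and contributes exactly zero; your aggregate reading (sum strictly negative off $\mathcal{E}$, equilibrium rows contributing zero) and the paper's per-component reading are the two consistent repairs. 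What the paper's decomposition buys is that Smith's result is reused as a black box and the uniform-decrease step is inherited from it; what your aggregation buys is a single clean application of a textbook theorem. The one step you gloss that the paper's cited argument supplies explicitly is the passage from pointwise strict decrease to attraction of $\mathcal{E}$: this needs uniformity of the decrease away from $\mathcal{E}$, which here follows from compactness of $\bD$ (each row lies in a scaled simplex), exactly as in the $\bD_\epsilon$ argument. Both your proof and the paper's share the same unacknowledged gloss that $V(\bx_i)$, in the intended application, actually depends on all of $\bX$ through the prices $p_j(\bX)$, not on $\bx_i$ alone, so neither chain-rule computation is strictly licensed by the theorem's hypotheses; this is a flaw of the theorem's framing rather than of your argument.
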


\begin{proof}
See appendix \ref{sec:smith_extension}.
\end{proof}

A suitable set of $V(\bx_i)$ is given by $V(\bx_i) = 
\sum_{j,k \in \mC_i} X_{ij} (p_k - p_j)_+^2$.  This
is shown in appendix \ref{sec:smith_extension}.

Although this
proof regards a continuous dynamical system, our simulation model,
along with any implementation, would be discrete (see section \ref{sec:realism}).
The \emph{discrete shadowing lemma} shows that, in a wide 
variety of circumstances, the behaviour of a discrete approximation of a continuous 
system can remain arbitrarily close that of the original one
(see Theorem 18.1.3 in \cite{katock1995dynamic}).  Note, however, that the
system proposed may not be such an approximation.

\subsection{The model in practice}
\label{sec:realism}

It is more useful to use split
proportions $s_{ij}=X_{ij}/d_i$ since $d_i$ will, in fact,
vary between days.  This $s_{ij}$ can then be directly interpreted
as the proportion of traffic choosing amongst choice set $\mC_i$ which
should be assigned to slot $j$.

Converting the dynamic system from section \ref{sec:basicmodel} 
from a continuous time to a discrete time (day-to-day) model 
is achieved as follows.
For every choice set $\mC_i$ then a
descent direction is given
by the vector $\dt{\mathbf{s_i}} = (\dt{s}_{i0},\dt{s}_{i1},\dt{s}_{i2},\ldots)$ 
where
$$\dt{s}_{ij} = \sum_{k \in (\mC_i \backslash j)} \left[ s_{ik}(p_k-p_j)_+ -
s_{ij}(p_j-p_k)_+\right].
$$
Moving a delta in this direction will move the $s_{ij}$ closer to 
equilibrium.  The ideal change in
this direction would be:
$$
s_{ij}:= s_{ij} + k_i \dt{s}_{ij},
$$
for some $k_i$.  The $k_i$ must have the following properties:
\begin{enumerate}
\item All $s_{ij}$ must stay in the range $[0,1]$.
\item $k_i \dt{s}_{ij}$ must remain scale invariant with respect
to price (that is, if all prices in the network are multiplied by the same
constant, then $k_i \dt{s}_{ij}$ will not change).
\item Similarly, if the pricing on a single link is multiplied by a constant
factor then $k_i \dt{s}_{ij}$ should not change for switches of slot
in time only. 
\item Neither ``too large'' nor ``too small'' a proportion of traffic must be
moved in every iteration.
\end{enumerate}

This is achieved as follows.  Firstly, define for each $\mC_i$ the
norm $||\mC_i|| = \sqrt{\sum_{j \in \mC_i} \dt{s}_{ij}^2}.$
Note that because of the definition of $\dt{s}_{ij}$ if all prices
in the network which apply to slots in $\mC_i$
are multiplied by a factor $C$ then $||\mC_i||$ is also multiplied by
the same factor.  If the constant $k_i$ is proportional
to $1/||\mC_i||$ then the value of $k_i \dt{s}_{ij}$
is invariant with respect to a multiplication
of the price achieving 2) and 3) above.  Secondly, the equations already
ensure $s_{ij} \geq 0$.  When $k_i\dt{s}_{ij}$ would cause $s_{ij} > 1$ 
then it is reduced accordingly by multiplying $k_i$ by a suitable
constant.  Finally, to achieve 4) then $k_i$ is set initially to
$0.1 / ||\mC_i||$ initially reducing logarithmically to 
$0.001/||\mC_i||$ by the end of
the simulation.  This roughly corresponds to 
$10\%$ of the traffic shifting in the earliest iterations and only
$0.1\%$ by the later iterations.  

In a running system, a system manager
would likely want only small (cautious) steps per day as the system
would run for a long time and attempting large shifts in traffic
between days would be unnecessary.

\section{Modelling framework}
\label{sec:modelling}
The scheme is assessed by evaluation based around real traffic traces.
In order to test how the scheme would perform in a variety of different
scenarios a number of assumptions about pricing are tested.
The traffic traces are analysed to see how they would respond
to the algorithm given hypothetical assumptions about pricing and about
what traffic can be moved in time and what traffic can be moved
in space.

\subsection{Model inputs}
\label{sec:mod_inputs}

Two data sets covering user traffic demand were used.
The first data set (EU) was collected from a large European ISP, and contains
mostly residential traffic.
The second data set (JP) was obtained from non-anonymised traces collected from
within a Japanese academic research network.
Each data set provides the amount of data sent and
received by every network user aggregated over each time window.
In the EU case the traffic is all traffic leaving and entering the
ISP network (over transit and peering links) aggregated for each user 
in hourly periods.
In the JP case the traffic is collected from a single link which all
traffic into and out of that network must traverse and is aggregated
per IP address inside the network in 15 minute periods.  In both cases
cost information and mapping of traffic to external links is
not known.
Summaries for both EU and JP datasets are shown below.

\renewcommand{\tabcolsep}{0.15cm}
\begin{center}
\footnotesize
\begin{tabular}{rccccrr}
& & & & & \multicolumn{2}{c}{Data (GB)} \\
\cline{6-7}
Dataset &  Start date    &  Days    & Window & Users & Out & In \\
\hline
EU & 07/11/11 & 7 & 1 hour & 37,580 & 2,343 & 12,672 \\
JP & 17/03/08 & 3 & 15 min & 12,728 & 849 & 566
\end{tabular}
\end{center}

The EU dataset is an example of a typical eyeball ISP, with inbound
traffic far higher than outbound. Although 
TARDIS is expected to be especially useful for providers like these,
even for networks where such behaviour is not observed traffic
balancing can still reduce costs.
The JP dataset exhibits higher outbound traffic
because demand is driven by remote hosts spanning multiple timezones.
Its inbound traffic, on the other hand,
displays a much stronger diurnal pattern and typically exceeds outbound
traffic during traditional peak hours.
Since TPG and pricing information are unavailable for either dataset, 
it is necessary to make some further assumptions in order to test the TARDIS
procedure.

\subsubsection{Assigning traffic pricing groups and prices}
\label{sec:links}

The JP dataset provides full packet header traces, and thus allows
traffic to be further aggregated by remote host.
By using geolocation on the source IP address, inbound data can be split
into hypothetical ingress TPGs.
Three sets were defined according to the geographic source of data:
\emph{Japan / China}, \emph{United States} and all \emph{other} sources.
This selection reduced the disparity in traffic volume between sets but
also mirrors an operational reality, namely that traffic from regional
and international partners is priced differently.
The resulting geographical bundling is shown in figure \ref{fig:mawi}
and used as a proxy to define traffic pricing groups.

\begin{figure}
\centering{
\includegraphics[width=8cm]{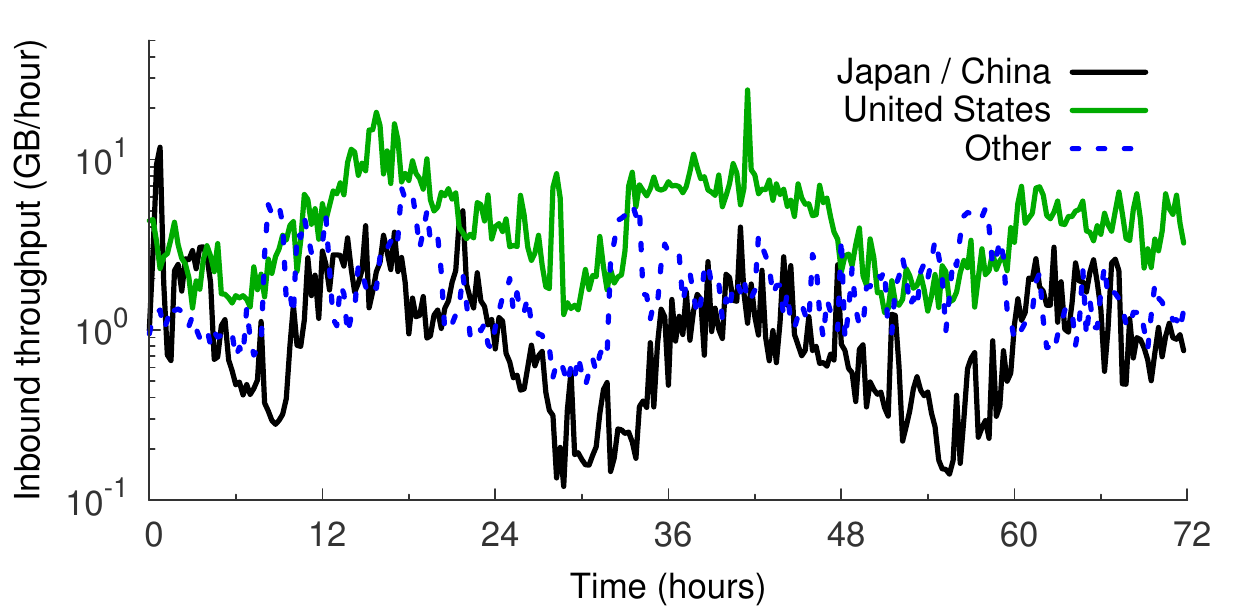}
\caption{Inbound traffic for the JP data set split by geographic
location}
\label{fig:mawi}
}
\end{figure}

For the EU dataset only user traffic aggregates are available and
no destination information was available to split traffic into TPGs.  
It would be a pessimistic assumption for TARDIS to simply split randomly
as this would give each TPG the same temporal profile and reduce the
opportunities for cost saving by space shifting (since each link would
have its peak hour at the same time).  Real traffic traces have different
time behaviour for traffic originating from different destinations
(for example P2P users turning off their client according to their local
diurnal pattern).  Therefore a means of splitting traffic was required
which would induce slight difference in the diurnal cycle.  The following
procedure was used:
\begin{enumerate}
\item Half the traffic was split equally between each TPG.
\item The other half of the traffic was split between each TPG according
to a weighted cosine function with 24 hour periodicity 
$\cos[ (t-p_i)*2 \pi/24]$ where $t$ is the time period for the traffic being split
(in hours) and $p_i$ is the time in hours hour where TPG $i$ gets peak
weight.
\end{enumerate}

This splitting process keeps the total amount of traffic in each hour the same
but ensures that different TPGs have slightly different peaks by choosing
different values for $p_i$.
To evaluate the impact of the difference in the diurnal patterns between TPGs
two different policies are used.
In the policy $T_2$ the
three TPGs have peak hours two hours apart and in $T_4$, four hours 
apart.  Obviously, the more widely separated the peak hours, the
better the system will perform under space shifting.
Figure \ref{fig:europe} shows inbound data for the EU dataset with the
$T_2$ peak shift policy over the three hypothetical TPGs.

\begin{figure}
\centering{
\includegraphics[width=8cm]{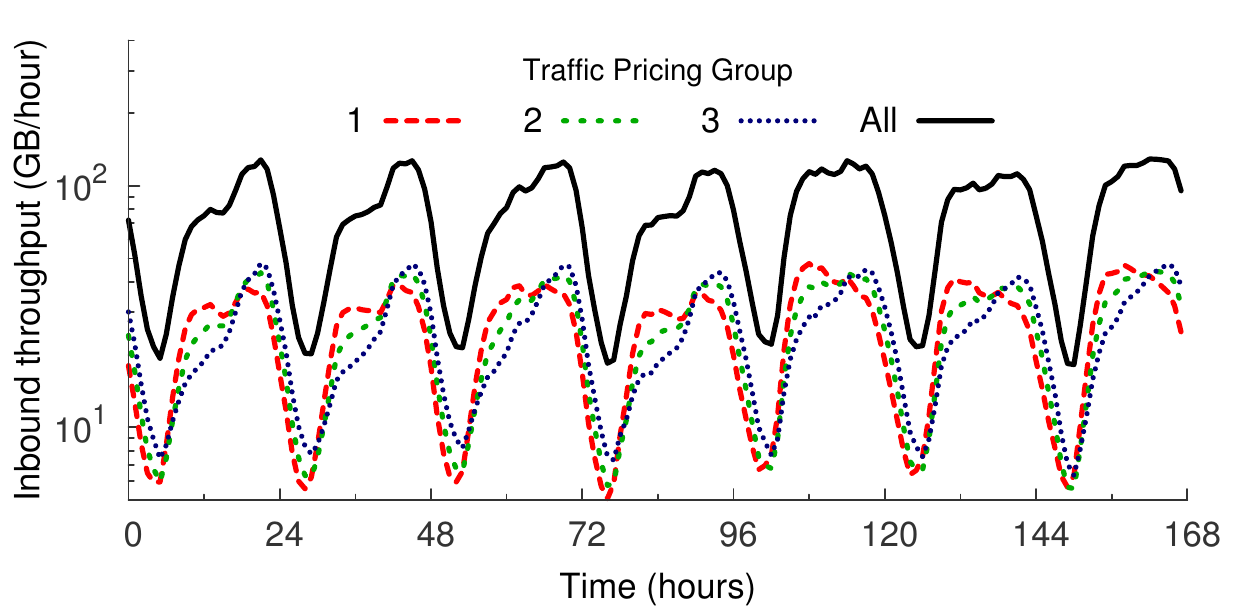}
\caption{Inbound traffic for EU dataset split across three TPGs using
$T_2$ weighting policy.}
\label{fig:europe}
}
\end{figure}

Having constructed hypothetical traffic pricing groups, it is now
necessary to assign pricing schemes to each group.   Hypothetical
schemes were assumed to test a variety of scenarios (no real pricing
data was available for the networks being tested).
It will be assumed that traffic will be charged at the 95th percentile
level on all TPGs, for the following pricing levels:
\begin{itemize}
\item \textbf{Equal Prices}, $P_{E}$, TPGs priced in the ratio 1:1:1.
\item \textbf{Low Variation}, $P_{L}$, TPGs priced in the ratio 4:2:1.
\item \textbf{High Variation}, $P_{H}$, TPGs priced in the ratio 10:3:1.
\end{itemize}
The high variation corresponds to ``We set the unit traffic costs to
be between \$1 and \$10 per Mbps, which corresponds to publicly
available data on the current prices for Internet transit" 
\cite{lakhina_cost_2012}.

\subsection{Modelling user choice}
\label{sec:choice}

The model discussed above can be used to determine how much demand a
user places on traffic pricing groups for each time window.
The next stage is to consider which other slots traffic may be shifted
to.
In practice, this will depend on technical feasibility, user willingness
and the availability of data in another slot.

A question arises as to the correlation between traffic which can
be shifted in time and traffic which can be shifted in space.
If it were unlikely that time-shifting traffic could also space shift
(because the two types were fundamentally different) then the 
two would be anti-correlated and this would mean
that more traffic overall could shift than if there were no 
correlation.
Conversely, it may be that the opposite is true and that traffic which can
shift in time is more likely to be able to shift in space.  If this is
the case then less traffic will be able to shift at all than if there
were no correlation.  For the model we use we take the independence
assumption as being a mid-way assumption.

An important problem in using the model in a practical context
is generating the choice sets. To approach this problem
systematically choice sets are generated separately
for demand which can shift in time, demand which can
shift in space and demand which can shift in both.
For example, for space shifting, 
there must be a choice set for each time window
containing slots for all TPGs.
For time shifting within a given TPG, 
there must be a choice set for each time window 
which contains slots representing that TPG at that
time window, the next time window, the time window
after that, up to the maximum allowable delay shift.  Although
the number of choice sets is large, it is manageable
as it is proportional to the product of the number of time windows and TPGs.

The user choice model selected can be condensed into
the small set of parameters shown below.
\begin{center}
\footnotesize
\begin{tabular}{r|p{6.5cm}}
$N_T$ & Proportion of users eligible for time shifting. \\
$N_S$ & Proportion of users eligible for space shifting. \\
$\mu_T$ & Mean proportion of data shifted in time by eligible users.\\
$\mu_S$ & Mean proportion of data shifted in space by eligible users.
\end{tabular}
\end{center}
Given these parameters, each user selects their time and
space shifting characteristics as follows:
\begin{enumerate}
\item The ability to shift traffic in time or space is determined by
comparing randomly generated numbers in (0,1) to thresholds $N_T$ and
$N_S$ respectively.
\item For time shifting users, the proportion of time shifted
data is set individually for
each user to $P_T=R^{(1-\mu_T)/\mu_T}$ where $R$ is a random number in
(0,1). This produces a population of users who swap up to 100\% of their
traffic in time, with a mean proportion of $\mu_T$.
\item Similarly, for space shifting users, the proportion of
space shifted data by each user
is set to $P_S = R^{(1-\mu_S)/\mu_S}$.
\end{enumerate}

An assumption needs to be made about the maximum time delay allowed
(here 18 hours was chosen) and which links are available for traffic
to choose (here, it was chosen that traffic which could shift in space
could always choose all three links).
The combination of both available time windows and TPGs produces a
choice set of available slots.
For simplicity, traffic is treated identically within each slot.
Given the available slots, the proportions $p_{ij}$
assigned by the dynamic model
in section \ref{sec:dynamics} are used to choose which slot to assign
the traffic to.
An amount of traffic $f$ can be assigned among slots in choice set
$\mC_i$ in two possible ways:
\begin{itemize}
\item \textbf{All-or-nothing assignment}. All of $f$ is assigned to a
randomly chosen $j \in \mC_i$ with probability $p_{ij}$.
\item \textbf{Proportional assignment}. A proportion $p_{ij}$ of $f$ is
assigned to each slot $j \in \mC_i$.
\end{itemize}

\subsection{Data analysis details}

After the previous assumptions are made, the processing of data to assess
the TARDIS algorithms is as follows:
\begin{enumerate}
\item \textbf{Initialisation}. A preliminary run through of all 
$N$ days of
traffic is used to initialise the traffic splits $s_{ij}$ and to
generate initial traffic profiles.
\item \textbf{Pricing}. Based on the traffic from the previous day, 
costs for each slot are calculated using the Shapley gradient procedure
presented in section \ref{sec:pricing}.
\item \textbf{Traffic Shifting}. The slot costs are used to
update $s_{ij}$ for all slots in $\mC_i$ as explained in section 
\ref{sec:dynamics}. Traffic assigned to each slot
is updated accordingly.
\item \textbf{Iteration}. The next day is processed by returning to step
2. If day $N$ has been reached, processing continues by wrapping
around to the first day, the second day and so on.
\end{enumerate}

\begin{figure}[ht!]
\begin{center}
\includegraphics[width=8.5cm]{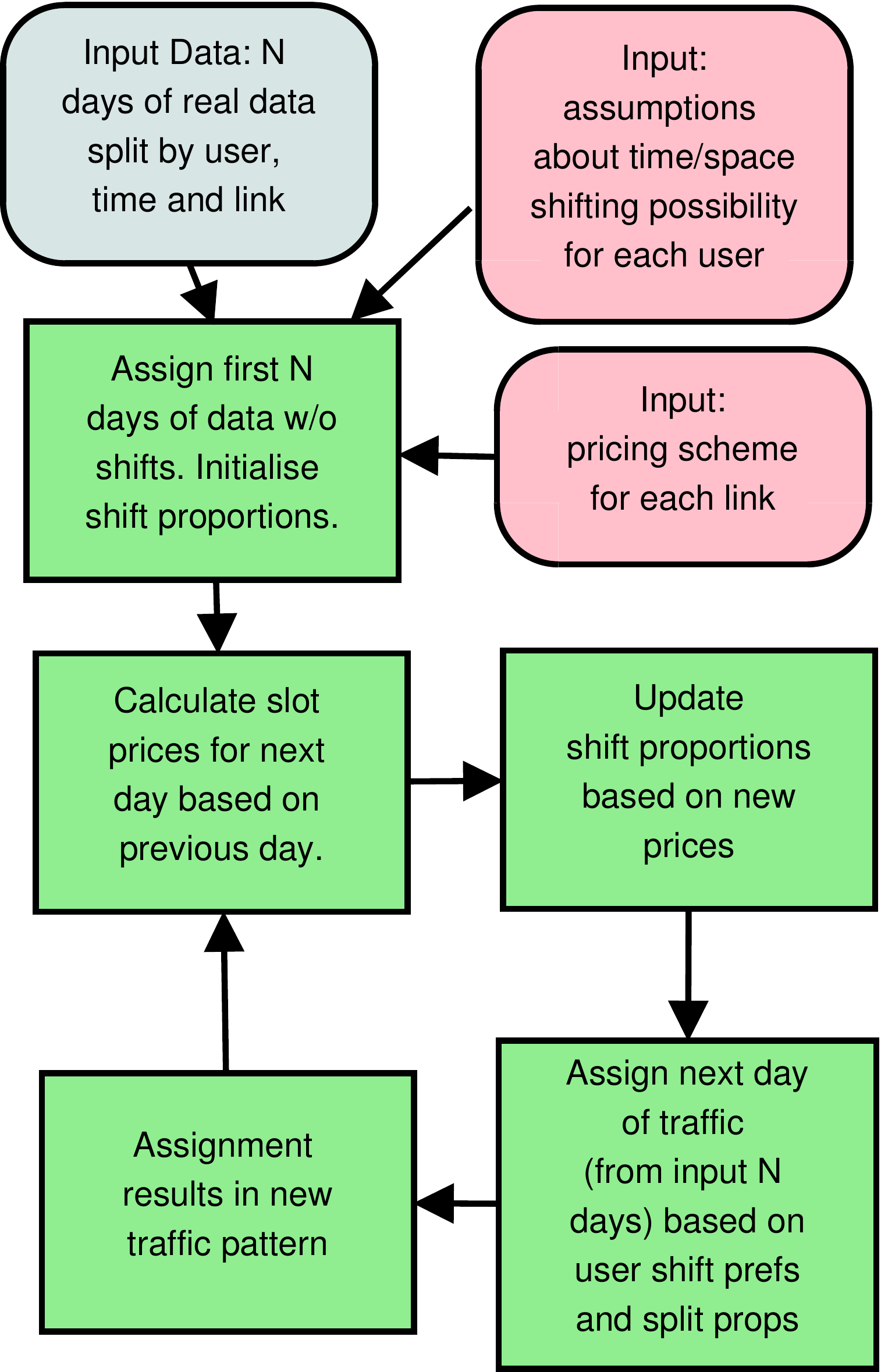}
\caption{Diagram showing the trace driven analysis of TARDIS.}
\label{fig:model}
\end{center}
\end{figure}

Figure \ref{fig:model} shows the analysis in diagramatic form.
The known input is the $N$ days of traffic (from EU or from JP).
This will include the assumptions $T_0, T_2$ or $T_4$ from
section \ref{sec:mod_inputs} for the EU data and the geographic
split for the JP data.  Two assumed inputs are the time/space
shifting possibilites (as discussed in section \ref{sec:choice})
and the pricing schemes.  All pricing schemes tested here
are 95th percentile (as that is the main focus of the Shapley gradient
method) and the ratios between TPG are either $P_H$, $P_L$ or $P_E$
with $P_H$ the highest difference between links being the base case.

The analysis begins by modelling the traffic as if it could not
choose slots for a warm up period.  This initialises the shift
proportions $s_{ij}$ for all choice sets $\mC_i$.  
These are used to generate slot prices for the next
day according to the Shapley gradient $G_j$ measured on the
previous day's traffic.  The slot prices are used to update the shift
proportions and these updated shift proportions are used to generate
the next day of traffic in combination with the next day of traffic data
from the EU or JP dataset.

Finally it is worth briefly mentioning execution time.  The algorithm
given is easily lightweight enough 
to run real time in practice for the number of users discussed
here.  The graphs in the next section each contain 25 graphs of 500 
days and were run on only modest computing hardware.  Each 500 day run
took less than 1 day of real time to complete for 10,000 users.  
An implemented system would obviously have to perform only 1 day per day.
It should also be noted that if the number of users is large then splitting 
rates can be calculated using only a sample of data.
In fact the runs for the EU data were
performed on a sample of 10,000 users -- runs on the full data set produced
extremely similar results.  The runs for the JP data were
performed on the full data set.

\section{Analysis of user data}
\label{sec:results}
Results are presented in a common format.  The simulation is run for
50 days as a ``warm up" and then 500 simulated days.   The effects of 
this recycling days is discussed in section 
\ref{sec:assumptions}.
The split ratio
is frozen for the final 50 days and the price paid over the first
50 days is compared with the price paid over the final 50 days. 
The graphs is
repeated for different proportions of traffic being allowed to swap in
time and in space.  It is uncertain what proportions of traffic could
in potential be engineered in this way so values from 0 to 60\%
are tried for space swapping and values from 0 to 20\% for time swapping.
The results are presented as the proportional reduction in price paid by
the ISP and are compared to the theoretical maximum efficiency.   The
proportional reduction is given by 
$(P_i-P_f)/P_i$ where $P_i$ is the mean daily price
for the initial 50 day period and $P_f$ is the mean daily price for the final
50 day period.  This ratio represents the reduction in price paid, e.g.
0.4 is a 40\% saving.
The theoretical maximum efficiency is simply the proportion
of traffic which can swap either in time or in space.  The data will
not necessarily produce the same result on repeated runs as there are 
random elements to which users are assigned to swap and different users
have different data profiles.  Repeated tests with the same input are
used to estimate the standard deviation on the obtained results and
these are included on all plots.  The exact procedure is described in 
section \ref{sec:assumptions}.  Each box in the plots, therefore has
the form $x \pm y (z)$ where $x$ is the calculated reduction in the price,
$y$ is twice the estimated standard deviation (if the variable were normally
distributed this would be a 95\% confidence interval 
and $z$ is the reduction which 
would be achieved if all shifted traffic had zero cost.  So, for example,
$0.1 \pm 0.02  (0.14)$ should be interpreted as a cost reduction of 0.1
(10\%) on the base costs with the true figure likely to be between 
0.08 and 0.12 (the range being two standard deviations each side of
the mean) and the maximum possible reduction (if all shifting traffic
simply disappeared) being 0.14.

In all the cases tested here it is assumed that all three TPG are available
for swapping.  While this is an optimistic assumption, in many real situations
more than three TPG would be available to an ISP.
The high variation
pricing scheme $P_H$ (i.e. a price ratio of 10:3:1) using
95th percentile pricing is the default for
the base scenario. In addition, for the 
EU data set the base  model includes the 
peak hour weighting scheme $T_2$ (see section \ref{sec:links}). 
The base model uses the proportional assignment 
choice model (see section \ref{sec:choice}) but this was found to make
no difference to the outcome (see section \ref{sec:assumptions}).

The proportion of time shifters is a combination of the proportion
of users whose traffic shifts in time and the proportion of traffic
shifted in time $N_T\mu_T$.  Similarly for space shifting the 
proportion of traffic shifted in space is $N_S\mu_S$.  To reduce
the number of modelling parameters varied then $N_T = N_S = 1$
and $\mu_T$ and $\mu_S$ are varied.  However, the results were
found to be no different if $\mu_S = \mu_T = 1$ and $N_T$ and $N_S$
are varied (see section \ref{sec:assumptions}).

\subsection{Base case results}

\begin{figure}
\centering{
\includegraphics[width=8.5cm]{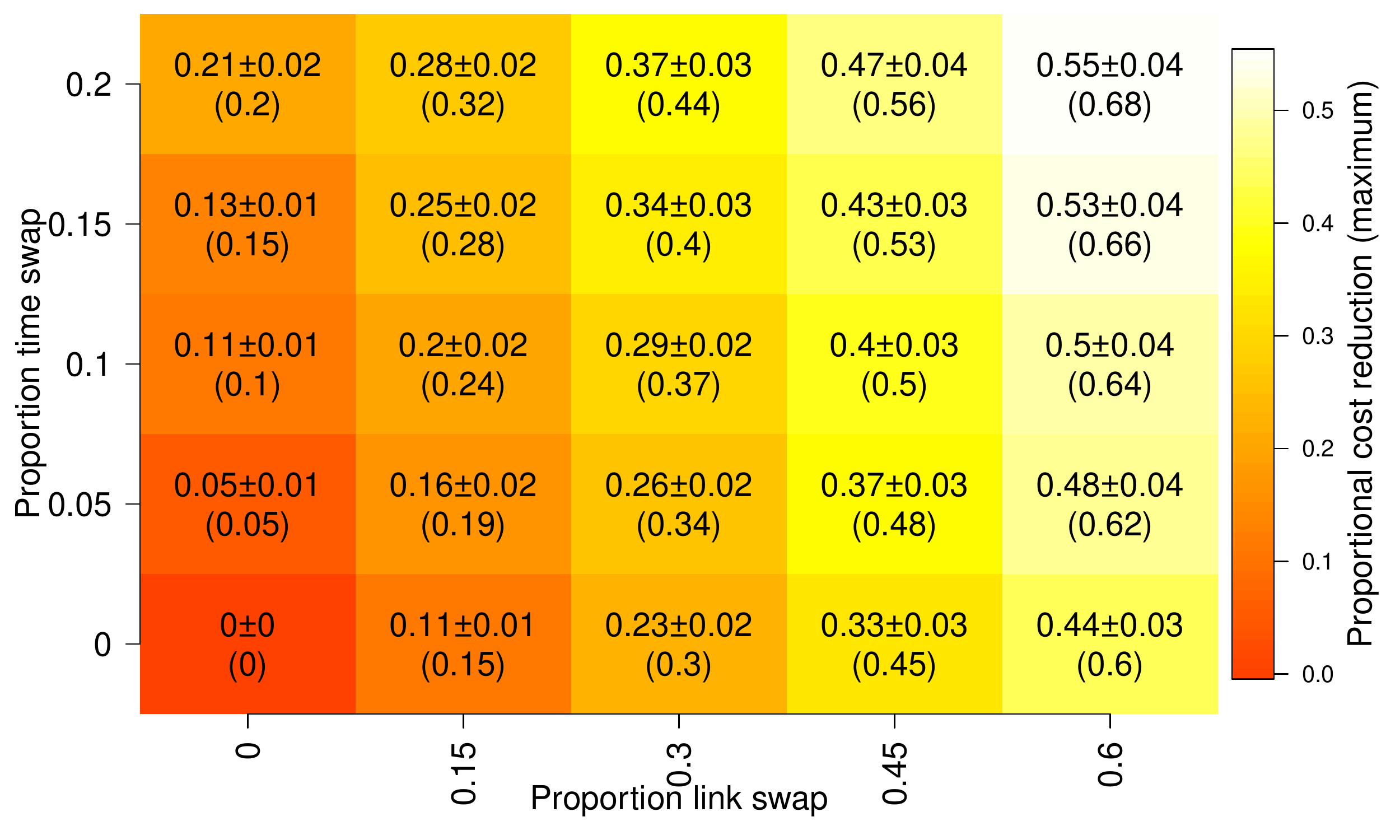}
\includegraphics[width=8.5cm]{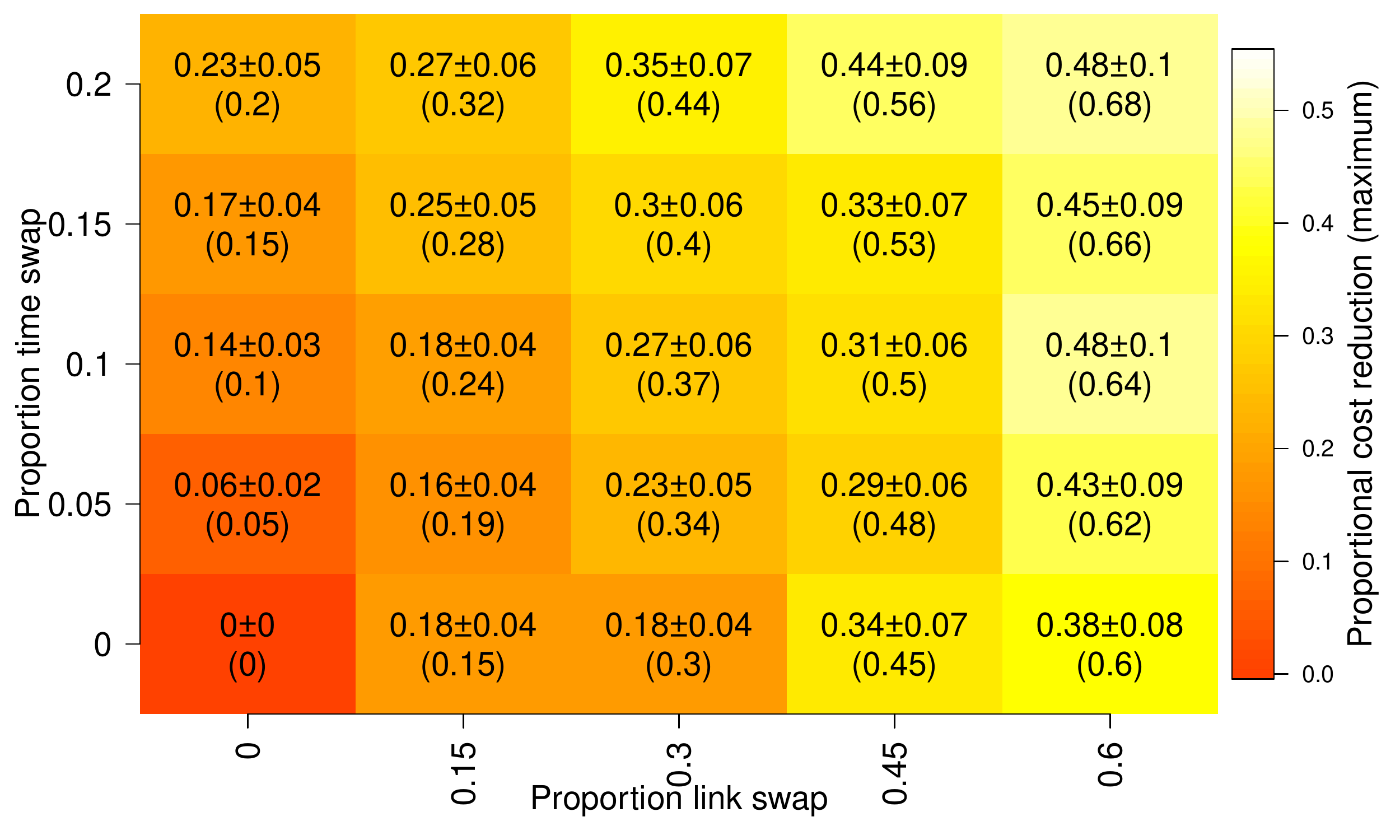}
\caption{Base case for EU data (top) and JP data (bottom).}
\label{fig:base}
}
\end{figure}

This section presents the results on the base case analysis.
Figure \ref{fig:base} (top) shows the base case for the EU data.  As can
be seen, the savings are substantial with, in the highest case, 55\% of
the transit price being reduced.  In many cases the saving is close to
the optimal pricing that could occur.  Note that in some cases it is slightly
over due to statistical variances in the results (for example, if the traffic
randomly selected to be allowed to shift happened to be on higher priced
links in that run).
It can be seen that, in particular, time shifting is extremely good at
saving cost with almost all of the potential benefit being realised.

Figure \ref{fig:base} (bottom) shows the same results for the JP data.
In this data there is more statistical variance in the results, 
likely because the data set has more ``imbalance" in the traffic 
distribution with a small number of users contributing a large amount
of data.  Naturally, if those users move their data then this contributes
more to the solution.  
The results show broadly the same pattern as the
EU data but the link swapping is slightly less effective.

\subsection{Variant results}

A number of variants on the base scenario were tried and are reported here.
The most obvious variant is to try the lower difference in pricing.
In this scenario the link prices are split in the ratio 4:2:1 not
10:3:1.  The results are presented in figure \ref{fig:low}.  
The effectiveness of time swapping is not reduced significantly as 
might be expected.
However, the effectiveness of link swapping remains relatively high,
especially in scenarios with low levels of time swapping.  Therefore
it seems link swapping can still be extremely effective for saving cost
even when cost differences are low.  In fact there was still some leeway
Nonetheless, in this scenario, in the cast with most swappers, 
almost 40\% of the price is saved in both data sets (the
theoretical maximum being 68\%). In fact tests on the equal price
policy $P_E$ show that by taking advantage of the peak time differences
gains can be made even when there is no difference between the
rates charged.  For example on the JP data set a cost saving of 10\%
was made in a scenario with 20\% space shifting and no time shifting.

\begin{figure}
\centering{
\includegraphics[width=8.5cm]{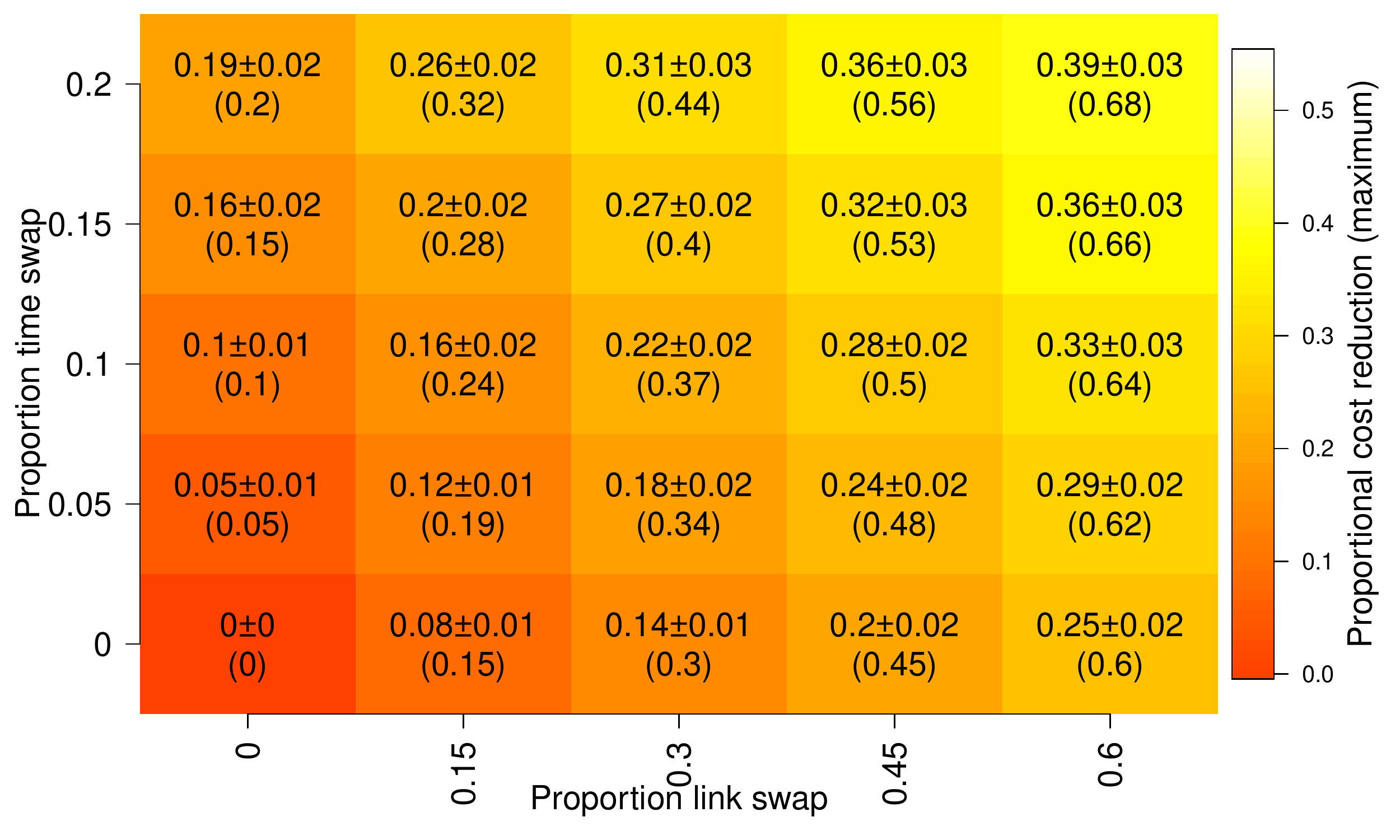}
\includegraphics[width=8.5cm]{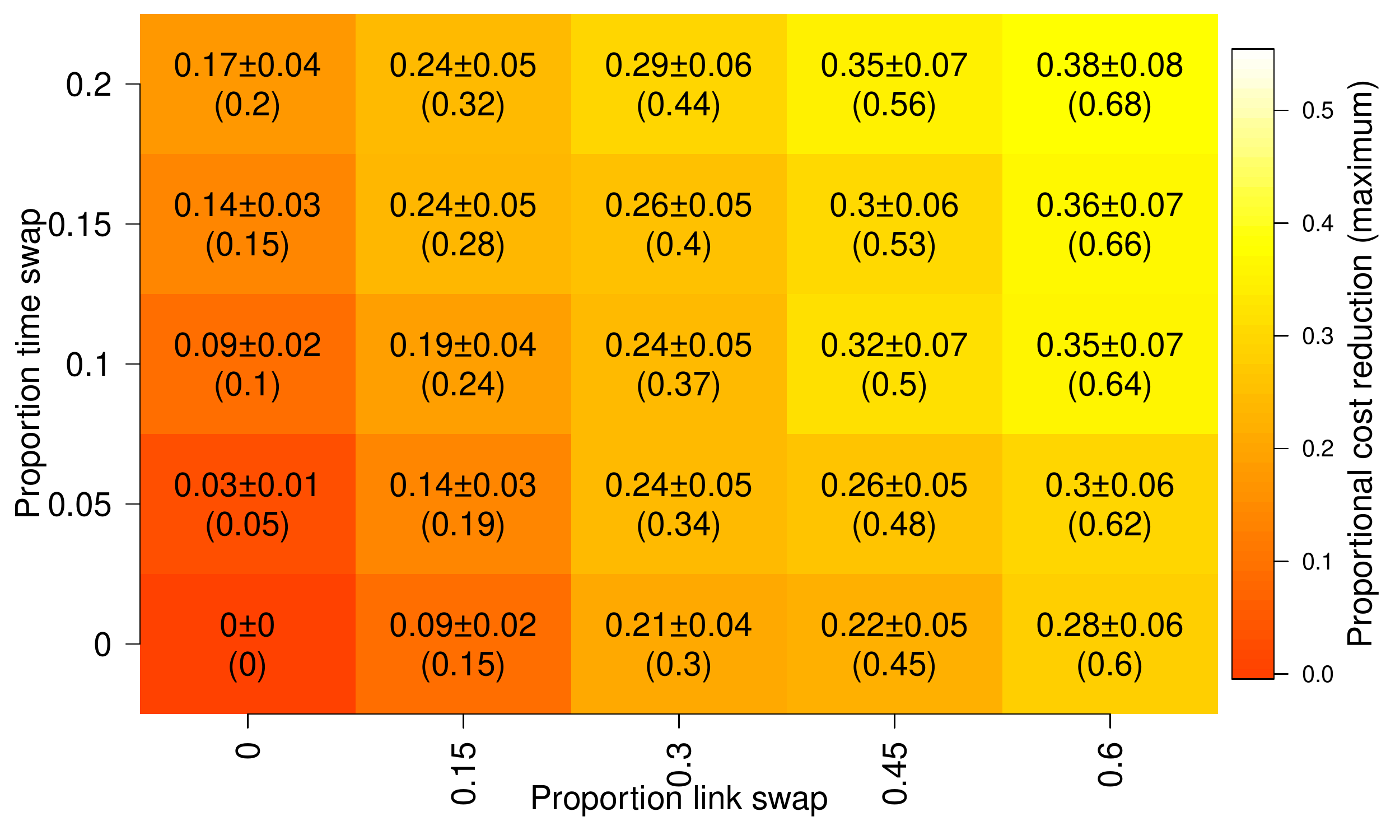}
\caption{Low pricing variant case for EU data (top) and JP data (bottom).}
\label{fig:low}
}
\end{figure}

An investigation was made into the effect of 
a wider split in the peak hours across pricing groups, the $T_4$ 
scenario described in section \ref{sec:links}.  As would be expected,
this produces some advantage for the TARDIS system but in fact
the advantage is slight and the results are indistinguishable from the
$T_2$ results when error bars are accounted for.  The graph for
the EU data is shown in figure \ref{fig:4hours}.  When compared
with figure \ref{fig:base} (top) it is apparent that little difference
to the price has been made.

\begin{figure}
\centering{
\includegraphics[width=8.5cm]{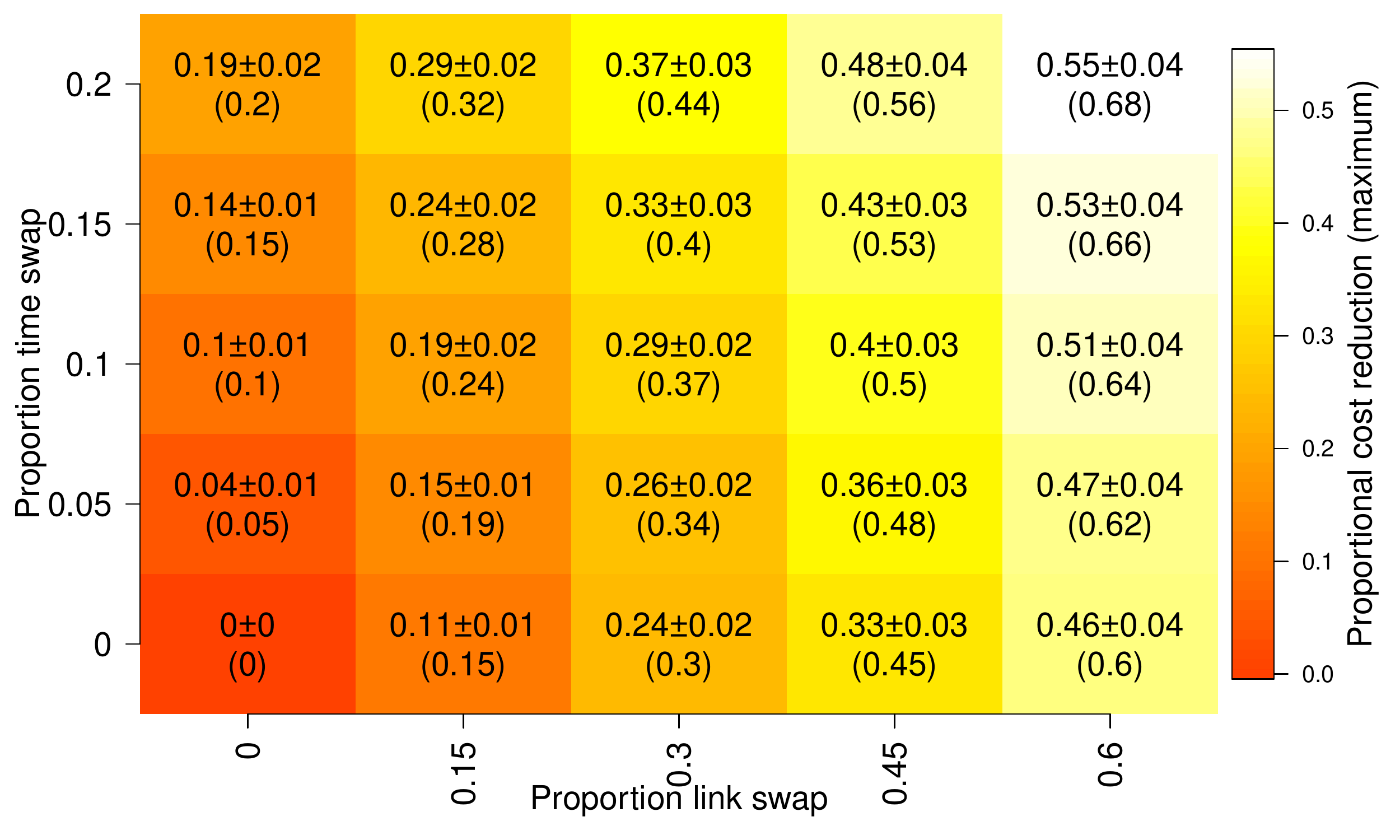}
\caption{The $T_4$ variant in traffic splits, EU data}
\label{fig:4hours}
}
\end{figure}

\subsection{Checking data analysis assumptions}
\label{sec:assumptions}

A number of assumptions on the data analysis are checked for robustness.
The first, and most important, is the repeatability of runs.  As mentioned,
the stochastic element to assignment of which traffic can be swapped
means that not every run gets the same results even with identical input.
It is therefore important to calculate the repeatability of the results.
This is assessed by a run with 60\% link swappers and 20\% time swappers
repeated ten times each
for the JP and EU data.  The Shapiro-Wilk normality test
showed that the results were not normally distributed and no
simple distribution could be found hence no simple way
of calculating confidence intervals was available.  Instead the
coefficient of variation is calculated (the ratio of standard deviation to mean).
This is used to calculate estimated standard deviations for the results.

Runs were made with prices on all links equal.  Benefits from swapping 
in space were still present as different links had different peak
times.  The exception was in the EU data set if policy $T_0$ was
used to split traffic between links.  In the case of exactly equal traffic
on all links and equal prices on all links then no benefit was discovered
from link swapping, as might be expected.  With equal prices, 20\% of link swappers and no time
swappers produced a 5\% saving in prices in the EU data (with the $T_2$ policy)
and 10\% in the JP data.

Runs were made to test assumptions about which traffic was chosen to shift.
No significant difference was found in the results when $x\%$ of users
were chosen to shift all their traffic or when all users were chosen to
shift $x\%$ of their traffic (chosen at random).

Runs were made to test assumptions about the limited number of days of data
available.  In specific it might be worried that the very good performance
on the data was due to the same data being recycled again and again.
To test this artificial extra days were generated from the real days of
data by the following procedure.  
\begin{enumerate}
\item A total traffic level for the new day is chosen from a normal
distribution with the same mean and variance as the real traffic's 
daily traffic level.
\item Each user picks one day at random and uses their traffic profile
for that day.
\item The traffic for each user is multiplied by a constant chosen to
give the required traffic level from the first step.
\end{enumerate}
This procedure will generate extra days of traffic which have the same mean
traffic level and same variance in traffic level between days as the original
and will also have the same split of traffic between users as the original.
However, every day of traffic will be different.  By using this procedure
then it is certain that cost savings are not due to the traffic being
``too predictable".  Conversely, however, the traffic levels for the
assessment days can increase or decrease (about a constant mean) so 
a perceived cost saving or increase could be a result of a temporary
increase in traffic.  Figure \ref{fig:mawi_remap} shows these results
for the JP data set.  This should be compared with figure \ref{fig:base}
(bottom).  The errors are larger on this graph because of the
fluctuation on the data.  This can be seen especially in the base
case with zero time and space shifting
where the cost has got worse even though no users can shift.  This
is simply a random fluctuation upwards in traffic.  The two standard
deviation bound is larger in this graph because it includes elements
for the repeatability of the assignment procedure but also variations
in the traffic levels.  There appears to have been a decline
in the benefit in some scenarios this is within the two standard
deviation bounds so it is hard to tell whether it is a real
variation or a result of random traffic growth.  For example
$0.48 \pm 0.1$ in the base case has become $0.41 \pm 0.14$ in
the case with varying traffic.  Most falls are of around this
level so if the variation in the traffic is causing worse performance
it seems that it is not greatly worse.

\begin{figure}[ht!]
\centering{
\includegraphics[width=8.5cm]{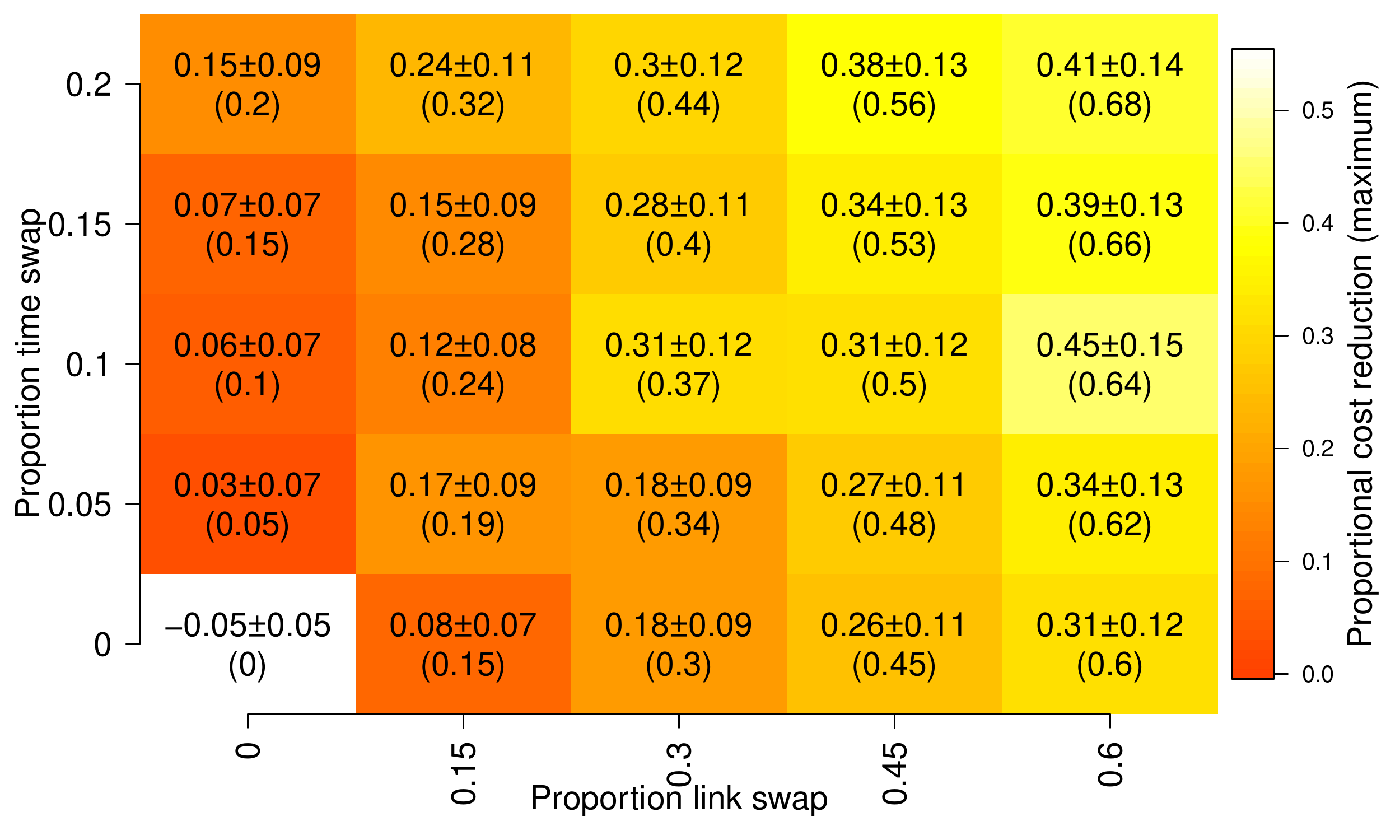}
\caption{The JP results with randomised ``extra" days of traffic.}
\label{fig:mawi_remap}
}
\end{figure}

Finally, the outcomes of using either the all-or-nothing or 
proportional allocation strategies (see section \ref{sec:choice}) 
were also tested. No significant differences were found between the 
two.

\subsection{Discussion and criticism of results}

It is hard to get realistic results to assess the TARDIS system
for several reasons.  Firstly, 
it is hard to get data sets for several days of data
which maps ingress and egress traffic 
to individual users (ideally in a non-anonymised way the lack 
of anonymisation is what enabled separation of the JP traffic 
by destination).  Publicly available traces known to the researchers
were unsuitable for one or more of these reasons and hence non
public data had to be used.
Secondly, ISP pricing plans are commercially sensitive and not usually
publicly disclosed.
Findally, knowledge of the likely traffic which can be shifted is hard
to get.  The percentage shifting in space
can only be estimated from work such as \cite{ager2011,frank2012cate,poese2010improving}, estimates seem
to be relatively high (\cite{frank2012cate} gives 40\% of traffic available
in three or more locations).
Knowledge of which traffic can be time shifted is even less
available although some insights can be gained from work such as
\cite{wong2011time,wong2011time2}.

In the light of these problems, the results in this section are
an attempt to get the best possible assessment of the system without
introducing too many parameters which cannot be estimated.  They 
should be taken as an investigation of how well the system is likely
to work under a range of conditions and a robust attempt to locate
potential areas where the assumptions would cause problems for
such a system in practice.  The large scale of the traces meant
that each investigation took considerable computing time 
(This is because of the necessity of simulating the behaviour of tens
of thousands of users for every day simulated, a real TARDIS system
would only have to calculate splitting rates which is a much easier
task.)

This said, the results are remarkably successful in all the 
system variants investigated here.  In the majority of cases
the TARDIS system extracted a high proportion of the maximum
possible benefit available.  Even when the assumptions were
relatively conservative (a low variation in pricing between
transit links, a small percentage of traffic able to swap in
space or in time) the benefit in terms of cost
saving could still be relatively large.

With more detailed knowledge of ISP pricing and internal
network structures
the scheme could be tested on its ability to reduce transit costs
while retaining the constraints of limited traffic capacities on internal
links or while avoiding causing excess congestion in downstream
systems.

\section{Conclusions}
\label{sec:conclusions}
This paper introduced TARDIS 
(Traffic Assignment and Retiming Dynamics with Inherent Stability)
an algorithm for determining how to reassign traffic in time and
space to reduce ISP transit costs.  A method was given to assign a
cost to a given link at a given time of day according to any of
a number of widely used pricing schemes.  This time of day cost
was used as an input to a reassignment scheme for traffic.  Modelling 
the scheme as a dynamical system it was shown that a continuous 
approximation to the scheme was provably stable and assigned
traffic to an equilibrium situation.

The scheme was tested in a realistic context by analysis of real life
data sets.  This analysis tested several different assumptions about
pricing levels and about proportions of traffic space and time shifting.  In the
majority of cases a large financial saving is possible.  Time shifting
appears to create a considerable saving in most situations.  Space
shifting creates a saving in all situations except those where
all links are equally priced and traffic is split equally across
all links.

\emph{This research has received funding from the Seventh Framework 
Programme (FP7/2007-2013) of the European Union, through the FUSION
project (grant agreement 318205).}

\bibliographystyle{abbrv}
\bibliography{sigmetrics_tardis_2014}

\begin{thebibliography}{10}

\bibitem{ager2011}
B.~Ager, W.~M\"{u}hlbauer, G.~Smaragdakis, and S.~Uhlig.
\newblock Web content cartography.
\newblock In {\em Proc. of ACM/SIGCOMM IMC}, pages 585--600, 2011.

\bibitem{antoniades2009}
D.~Antoniades, E.~P. Markatos, and C.~Dovrolis.
\newblock One-click hosting services: a file-sharing hideout.
\newblock In {\em Proc. of ACM/SIGCOMM IMC}, pages 223--234, 2009.

\bibitem{chhabra2010home}
P.~Chhabra, N.~Laoutaris, P.~Rodriguez, and R.~Sundaram.
\newblock Home is where the (fast) internet is: Flat-rate compatible incentives
  for reducing peak load.
\newblock In {\em Proceedings of ACM HomeNETS}, 2010.

\bibitem{choffnes2008taming}
D.~R. Choffnes and F.~E. Bustamante.
\newblock Taming the torrent: a practical approach to reducing cross-{ISP}
  traffic in peer-to-peer systems.
\newblock In {\em Proc. of ACM SIGCOMM}, 2008.

\bibitem{dimitr2009perc}
X.~Dimitropoulos, P.~Hurley, A.~Kind, and M.~P. Stoecklin.
\newblock On the 95-percentile billing method.
\newblock In {\em Proc. of PAM}, 2009.

\bibitem{frank2012cate}
B.~Frank, I.~Poese, G.~Smaragdakis, S.~Uhlig, and A.~Feldmann.
\newblock {Content-aware Traffic Engineering}.
\newblock In {\em Proceedings of ACM SIGMETRICS/Performance 2012}, London, UK,
  June 2012.

\bibitem{goldenberg2004multihome}
D.~K. Goldenberg, L.~Qiuy, H.~Xie, Y.~R. Yang, and Y.~Zhang.
\newblock Optimizing cost and performance for multihoming.
\newblock In {\em Proc. of ACM SIGCOMM}, pages 79--92, 2004.

\bibitem{CooperativeISPCDN}
W.~Jiang, R.~Zhang-Shen, J.~Rexford, and M.~Chiang.
\newblock Cooperative content distribution and traffic engineering in an {ISP}
  network.
\newblock In {\em ACM SIGMETRICS}, 2009.

\bibitem{wong2011time2}
C.~Joe-Wong, S.~Ha, and M.~Chiang.
\newblock Time-dependent broadband pricing: Feasibility and benefits.
\newblock In {\em Proc. of IEEE ICDCS}, 2011.

\bibitem{wong2011time}
C.~Joe-Wong, S.~Ha, and M.~Chiang.
\newblock Time-dependent internet pricing.
\newblock In {\em Proc. of Internet Technologies and Applications Conference
  (ITA)}, 2011.

\bibitem{katock1995dynamic}
A.~Katok and B.~HasselBlatt.
\newblock {\em Introduction to the Modern Theory of Dynamical Systems}.
\newblock Cambridge University Press, 1995.

\bibitem{laoutaris2008good}
N.~Laoutaris and P.~Rodriguez.
\newblock Good things come to those who (can) wait: or how to handle {D}elay
  {T}olerant traffic and make peace on the internet.
\newblock In {\em Proc. of ACM HotNets}, 2008.

\bibitem{lakhina_cost_2012}
M.~Motiwala, A.~Dhamdhere, N.~Feamster, and A.~Lakhina.
\newblock Towards a cost model for network traffic.
\newblock {\em SIGCOMM Comput. Commun. Rev.}, 42(1):54--60, 2012.

\bibitem{poese2010improving}
I.~Poese, B.~Frank, B.~Ager, G.~Smaragdakis, and A.~Feldmann.
\newblock Improving content delivery using provider-aided distance information.
\newblock In {\em Proc. of ACM/SIGCOMM IMC}, pages 22--34, 2010.

\bibitem{rumin2011deep}
R.~C. Rumin, N.~Laoutaris, X.~Yang, G.~Siganos, and P.~Rodriguez.
\newblock Deep diving into bittorrent locality.
\newblock In {\em Proc. of IEEE INFOCOM}, 2011.

\bibitem{smith1984stability}
M.~J. Smith.
\newblock The stability of a dynamic model of traffic assignment -- an
  application of a method of {L}yapunov.
\newblock {\em Transp. Science}, 18(3):245--252, 1984.

\bibitem{stanojevic2010heavy}
R.~Stanojevic, N.~Laoutaris, and P.~Rodriguez.
\newblock On economic heavy hitters: {S}hapley value analysis of 95th
  percentile pricing.
\newblock In {\em Proc. of ACM/SIGCOMM IMC}, 2010.

\bibitem{valancius2008howmany}
V.~Valancius, C.~Lumezanu, N.~Feamster, R.~Johari, and V.~Vazirani.
\newblock How many tiers? pricing in the internet transit market.
\newblock In {\em Proc. of ACM SIGCOMM}, 2011.

\bibitem{wardrop1952aspects}
J.~G. Wardrop.
\newblock Some theoretical aspects of road traffic research.
\newblock {\em Proc. of the Inst. of Civil Engineers II}, 1:325--378, 1952.

\bibitem{xie2008p4p}
H.~Xie, Y.~R. Yang, A.~Krishnamurthy, Y.~Liu, and A.~Silverschatz.
\newblock {P4P}: Provider portal for applications.
\newblock In {\em Proc. of ACM SIGCOMM}, 2008.

\end{thebibliography}

\appendix

\section{The Shapley gradient price}
\label{sec:shap_indep}
In section \ref{sec:shapleygradient} the Shapley gradient $G_j$ was 
introduced as the cost gradient of the Shapley value when a fictitious
user $N+1$ adds an amount of traffic $du$ to slot $j$.
Define the per user Shapley gradient for user $i$ in slot $j$ as
$\phi'_{ij}= (\phi_i(d_{ij}) - \phi_i) / du$ where
$\phi_i(d_{ij})$ is the Shapley value for user $i$ with
extra traffic $du$ in slot $j$.  It was stated that for all
schemes considered in this paper then the mean over
$i$ of $\phi'_{ij}$ is $G_j$.  In the simplest cases,
$\phi'_{ij} = G_j$ for all $i$.

Formally, from section \ref{sec:shapleygradient}
\begin{align*}
\nonumber G_j  &=  
\frac{1}{du\,(N+1)!} \\ 
& \sum_{\pi \in \mS_{\mN'}} [v(\mS(\pi,N+1)) - 
v(\mS(\pi,N+1) \backslash N+1) ],
\end{align*}
where 
$\mS_{\mN'}$ is the set of arrangements of the
users $\mN$ plus the fictitious $(N+1)$th user and 
$\mS(\pi,N+1)$ is the set of all users arriving
not later than user $N+1$ in the permutation $\pi$.

For linear pricing this is trivial to show.  If the slot
$j$ is charged at rate $r_j$ then $\phi_i (d_{ij}) - \phi_i =
r_j du$ since the extra traffic $du$ costs $r_j du$.  Hence,
$\phi'_{ij} = r_j$ and is not dependent on $i$.  
It is already shown in section \ref{sec:linear_price}
that $G_j = r_j$.

For the pricing of a link of fixed capacity as described in 
section \ref{sec:fixed_price} then the case is very similar.
The price depends only on the total flow on the link.  So
\begin{align*}
\phi_i (d_{ij}) - \phi_i & = \frac{f_j  + du -\alpha m}{m - (f_j + du)} -
\frac{f_j  -\alpha m}{m - f_j} \\
& = \frac{(1-\alpha)m \, du}{(m - f_j) (m - (f_j+du))}\\
& = \frac{(1-\alpha)m \, du}{(m - f_j)^2},
\end{align*}
where the final equality is because $du$ is infinitesimal with
respect to $f_j$.
Hence $\phi'_{ij}=G_j = \frac{(1-\alpha)m}{(m - f_j)^2}$ as
calculated in section \ref{sec:fixed_price}.
Any scheme where the price does not depend on the user could
be analysed in a similar manner.

For the 95th percentile pricing then by the arguments from
section \ref{sec:shap95}
$$
\phi'_{ij} = (\phi_i(d_{ij}) - \phi_i)/du = 
 \frac{1}{N!} \sum_{\pi \in \mS_{\mN}} [ F_{ij} ],$$
where 
\begin{align*}
F_{ij} & = [v(\mS(\pi,i)) + I (j \in \mT^{(95)}(\mS(\pi,i)) r_j\, du \\
& \quad - v(\mS(\pi
,i) \backslash i)]/du  - [v(\mS(\pi,i))  - v(\mS(\pi,i) \backslash i)]/du \\
& = I (j \in \mT^{(95)}(\mS(\pi,i))  r_j.
\end{align*}
The mean of $\phi'_{ij}$ over all users $i$, $\overline{\phi'_{ij}}$
is given by:
$$
\overline{\phi'_{ij}}= \sum_{i=1}^N \frac{\phi'_{ij}}{N} = 
\frac{1}{N N!} \sum_{i=1}^N \sum_{\pi \in \mS_{\mN}} 
I (j \in \mT^{(95)}(\mS(\pi,i))  r_j.
$$
Now notice that 
$\sum_{\pi \in \mS_{\mN}}$ covers all arrangements of $N$ items
and 
$I (j \in \mT^{(95)}(\mS(\pi,i))$ covers all occasions where an
addition of traffic $du$ following user $i$ falls in the 95th
percentile set for $\mS(\pi,i)$.  As this is summed over all
values of $i$ this is exactly the same as considering those
occasions where the $N+1$th user (who adds traffic $du$
to slot $j$) follows any other user over any
arrangement of users in $\mN'$ (the set of users 1 to $N$ plus
the fictitious user $N+1$).  This includes every possible
arrangement of users in $\mN'$ except those with user $N+1$
first.  Therefore
\begin{align*}
\sum_{i=1}^N \frac{\phi'_{ij}}{N} & = \frac{r_j}{N N!}
\sum_{\pi \in \mS_{\mN'}} [v(\mS(\pi,N+1))  \\
& \quad 
- v(\mS(\pi,N+1) \backslash N+1) ] \\
 & = \frac{N+1}{N} \frac{r_j}{(N+1)!} \sum_{\pi \in \mS_{\mN'}} [v(\mS(\pi,N+1)) \\
 & \quad
- v(\mS(\pi,N+1) \backslash N+1) ] \\
 & = \frac{N+1}{N} G_j.
\end{align*}
As $N$ becomes large $(N+1)/N \rightarrow 1$ and hence
the mean over $i$ of $\phi'_{ij}$ becomes closer to $R_j$.
More precisely
 $\overline{\phi'_{ij}} = R_j + O(1/N)$
as stated in section \ref{sec:shap95}.

\section{A stability proof for multiple choice sets}
\label{sec:smith_extension}
Theorem \ref{theorem:stability} is identical to that in the appendix
of \cite{smith1984stability} with the exception of condition (3)
which in that reference is given for a single vector $\bx$ as
$\nabla V(\bx)\cdot \Phi(\bx) < 0$, in the theorem here is given
over several vectors as $\nabla_i V(\bx_i)\cdot \Phi(\bx_i) < 0$
where $\nabla_i$ is the gradient over the vector space of $\mC_i$, 
that is 
$$
\nabla_i(\bx) = (\partial \bx/X_{i\,1}, \partial \bx/X_{i\,2}, \ldots).
$$

Recall that $\bD$ is the set of $X_{ij}$ which are \emph{demand feasible}
(that is $\sum_j X_{ij} = d_i$ for all $i$).  Intuitively, this says
that the demand which must be assigned to choice set $\mC_i$ is
equal to the sum of the demand which is actually assigned.  It is
important to note that the demand feasible set is decomposable by choice
set in the sense that the matrix $X$ is demand feasible if and only
if each of its rows $\bx_i$ is.  Therefore, 
let $\bx_i$ be \emph{demand feasible} if 
$\sum_j X_{ij} = d_i$ and (slightly abusing the notation) say
that $\bx_i \in \bD$.
For points on the trajectory of the dynamical system then
$\dt{\bX} = \Phi(\bX)$ and therefore, for all $i$ then
$\dt{\bx_i} = \Phi(\bx_i)$.  

Having established that a demand $\bX \in \bD$ must be composed
of vectors $\bx_i$ all of which themselves are demand feasible 
($\bx_i \in \bD$) then the proof now follows that in the appendix 
of \cite{smith1984stability} for
each $\bx_i$ component separately.  This shows that each choice set
individually converges to its equilibrium condition and
hence the entire system converges.  

In brief the proof in \cite{smith1984stability} follows
an epsilon delta style argument.  Firstly a set $\bD_\epsilon$ is
defined as the set of demand feasible points $\bx \in \bD$
with $V(\bx) > \epsilon$ for
some $\epsilon > 0$.  
It is then shown that if the system starts at a position $\bx \in 
\bD_\epsilon$ and the
dynamical system evolves following $\dt{\bx} = \Phi(\bx)$ and
stays within $\bD_\epsilon$ until some time $T$ then there
is some fixed $\delta > 0$ for which $\nabla V(\bx)\cdot\Phi(\bx) < -\delta$.
Let $\bx(t)$ represent the system state at time $t$.  The function
$V(\bx)$ must shrink at least at rate $\delta$ and hence
$V(\bx(T)) - V(\bx(0)) < - \delta T$.  Therefore $V(\bx(t_0)) < \epsilon$
for some finite $t_0 \leq V(\bx(0))/\delta$.  In other words 
the points leave $\bD_\epsilon$ within some finite time whatever the
size of $\epsilon$.
Since this argument applies for all $\epsilon$ then as 
$\epsilon \rightarrow 0$ the set $\bD_\epsilon$ covers all of $\bD$ except
for the equilibrium positions where $V(\bX) = 0$.

It is now necessary to show that the candidate 
$V(\bx_i)$ meets the three conditions of Theorem \ref{theorem:stability}.
Recall from section \ref{sec:basicmodel} that the candidate functions are
$$
V(\bx_i)=\sum_{j,k \in \mC_i} X_{ij} (p_k - p_j)_+^2.
$$

The first two conditions are trivially shown.  
Firstly, $V(\bx_i) \geq 0$ since 
both
$X_{ij} \geq 0$ and $(p_j - p_k)_+^2 \geq 0$.  Secondly $V(\bx_i) = 0$
only when $\mC_i$ is at equilibrium.  This follows from the
definition of equilibrium.  If some term is non zero then $X_{ij} > 0$
and $p_j > p_k$ for some $j,k \in \mC_i$ then this 
implies there is a flow $X_{ij}$ 
in choice set $\mC_i$ which has a price $p_j$ greater than some $p_k$
also in $\mC_i$.  This is counter to the definition of equilibrium in 
definition \ref{defn:eqbm}.

The final condition (3) of the theorem is the most difficult and again
the proof follows \cite{smith1984stability} but generalised to several
choice sets.

Begin by differentiating $V(\bx_i)$ by parts.
Define $\Delta_{ijk}$ as a vector with 
$-1$ in position $j$ and $+1$ in position $k$ if $j,k \in \mC_i$ and
the zero vector if either $j \notin \mC_i$ or $k \notin \mC_i$.  
Define $\bp$ as
the price vector $\bp(\bX) = (p_1(\bX), p_2(\bX), \ldots)$.
In this form, then $V(\bx_i) = \sum_{j,k \in \mC_i} X_{ij}(p_j - p_k)_+^2$
can be written more compactly
as $V(\bx_i) = \sum_{j,k} - X_{i\,j}\, (\bp(\bX)\cdot\Delta_{ijk})_+^2$
(where the sum is over all flows in $\bX$ and $\cdot$ is the
inner product).  Considering $\Phi(\bx_i) = (\dt{X_{i\,1}}, 
\dt{X_{i\,2}}, \ldots)$ the elements 
$$
\dt{X_{ij}} = \sum_{k \in \mC_i} \left[ X_{ik}(p_k-p_j)_+ -
X_{ij}(p_j-p_k)_+\right]$$
can be written as
as 
$$
\dt{X_{ij}} = \sum_{k} X_{ij} (\bp(\bX)\cdot\Delta_{ijk})_+
- X_{ik} (\bp(\bX)\cdot\Delta_{ijk})_+.
$$

Performing the differentiation $\nabla_i V(\bx_i) $ by parts gives
\begin{align*}
\nabla_i V(\bx_i) & = \sum_{j,k} - 2 X_{ij}\, (\bp(\bX)\cdot\Delta_{ijk})_+
\nabla_i(\bp(\bX)\cdot\Delta_{ijk})_+ \\
& \quad - \nabla_i(X_{ij}) (\bp(\bX)\cdot\Delta_{ijk})_+^2 \\
 &= -2  \bJ_i \Phi(\bx_i)^T - \sum_{j,k}(\bp(\bX)\cdot\Delta_{ijk})_+^2 \be_{ij},
\end{align*}
where $\be_{ij}$ is the basis vector in the space of $\mC_i$ which is
0 except for a 1 in the direction of $X_{ij}$ and $\bJ_i$ is the 
Jacobian of the price matrix at $\bX$
in the vector space $\mC_i$ given by
$$
\bJ_i = \left[ 
\begin{array}{ccc}
\partial p_1(\bX)/ \partial X_{i\,1} & \partial p_1(\bX)/ \partial X_{i\,2} & \cdots \\
\partial p_2(\bX)/ \partial X_{i\,1} & \partial p_2(\bX)/ \partial X_{i\,2} & \cdots \\
\vdots & \vdots & \ddots
\end{array}
\right].
$$
Hence 
\begin{align}
\nabla_i V(\bx_i) \cdot \Phi(\bx_i) & = -2 \Phi(\bx_i) \bJ_i \Phi(\bx_i)^T 
\label{eqn:nvphi}
\\
\nonumber
& \quad - \sum_{j,k}(\bp(\bX)\cdot\Delta_{ijk})_+^2 
\sum_{l,m} (\bp(\bX)( \be_{ij} \cdot \Delta_{ilm}).
\end{align}

One of the conditions on $\bJ_i$ (the rate of change of
the price)  was that it was monotonically non
decreasing (condition 2 in section \ref{sec:pricing}).  
Hence $ \bX \bJ_i \bX^T \geq 0$ for all $X$.  Thus
$-2 \Phi(\bx_i) \bJ_i \Phi(\bx_i)^T < 0$.

It can also be shown that 
$- \sum_{j,k}(\bp(\bX)\cdot\Delta_{ijk})_+^2 
\sum_{l,m} (\bp(\bX)( \be_{ij} \cdot \Delta_{ilm} \leq 0$ following
\cite{smith1984stability} again
by showing that the remaining term in $\nabla V(\bx_i)\cdot \Phi(\bx_i)$,
given by 
$$
- \sum_{j,k}(\bp(\bX)\cdot\Delta_{ijk})_+^2 
\sum_{l,m} (\bp(\bX)( \be_{ij} \cdot \Delta_{ilm}) < 0.
$$
Since both terms in \eqref{eqn:nvphi} are less than zero
then this gives that $\nabla V(\bx_i)\cdot \Phi(\bx_i) < 0$ for all $i$ 
as required.

\end{document}